\documentclass[5p,times]{elsarticle}
\journal{Future Generation Computer Systems}
\usepackage{tikz}
\usepackage{url}
\usetikzlibrary{chains,shapes.multipart}
\usetikzlibrary{shapes}
\usetikzlibrary{automata,positioning}
\usetikzlibrary{calc}
\usepackage{pgfplotstable}
\usepackage{pgfplots}
\usepackage{physics}
\usepackage{amsmath}
\usepackage{dsfont}
\usepackage{tikzscale}
\usepackage{makecell}
\usepackage{tabularx}
\usepackage{tkz-graph}
\usepackage{algorithm, algorithmic}
\usepackage{float}
\usepackage{amsfonts}
\usetikzlibrary{shapes.geometric}
\usetikzlibrary{quantikz}
\usepackage{mathrsfs}
\usepackage{amsthm}
\theoremstyle{definition}
\newtheorem{definition}{Definition}[section]
\newtheorem{theorem}{Theorem}
\theoremstyle{definition}
\usepackage{hyperref}
\hypersetup{
    colorlinks=true,
    linkcolor=cyan,
    filecolor=magenta,      
    urlcolor=blue,
    pdftitle={Overleaf Example},
    pdfpagemode=FullScreen,
    }
\usepackage{draftwatermark}
\SetWatermarkText{Accepted for publication at Future Generation Computer Systems}
\SetWatermarkScale{.15}
\SetWatermarkHorCenter{.97\paperwidth}
\SetWatermarkAngle{90}
\SetWatermarkColor[]{red}
\newlength{\RoundedBoxWidth}
\newsavebox{\GrayRoundedBox}
\newenvironment{GrayBox}[1][\dimexpr\columnwidth-4.5ex]%
   {\setlength{\RoundedBoxWidth}{\dimexpr#1}
    \begin{lrbox}{\GrayRoundedBox}
       \begin{minipage}{\RoundedBoxWidth}}%
   {   \end{minipage}
    \end{lrbox}
    \begin{center}
    \begin{tikzpicture}%
       \draw node[draw=black,fill=gray!10,rounded corners,%
             inner sep=2ex,text width=\RoundedBoxWidth]%
             {\usebox{\GrayRoundedBox}};
    \end{tikzpicture}
    \end{center}}
\usepackage{bm}
\usepackage{subcaption}
\usepackage{comment}
\begin{document}

\begin{frontmatter}
\title{Paving the Way to Hybrid Quantum-Classical Scientific Workflows}

\author[inst1,inst2]{Sandeep Suresh Cranganore}
\affiliation[inst1]{organization={Forschungszentrum Jülich GmbH, Peter Grünberg Institute, Quantum Control (PGI-8)},
            addressline={Wilhelm Johnen Straße}, 
            city={Juelich},
            postcode={52425}, 
            country={Germany}
            }
\affiliation[inst2]{organization={TU Wien, HPC group},
            addressline={Favoritenstrasse 9-11}, 
            city={Vienna},
            postcode={1040}, 
            country={Austria}}

\author[inst2]{Vincenzo De Maio}

\author[inst2]{Ivona Brandic}

\author[inst3]{Ewa Deelman}

\affiliation[inst3]{organization={University of Southern California, Information Sciences Institute},
            addressline={4676 Admiralty Way Suite 1001
}, 
            city={Marina del Rey},
            postcode={90292}, 
            state={California},
            country={USA}}

\begin{abstract}
    The increasing growth of data volume, and the consequent explosion in demand for computational power, are affecting scientific computing, as shown by the rise of extreme data scientific workflows. As the need for computing power increases, quantum computing has been proposed as a way to deliver it. It may provide significant theoretical speedups for many scientific applications (i.e., molecular dynamics, quantum chemistry, combinatorial optimization, and machine learning).  Therefore, integrating quantum computers into the computing continuum constitutes a promising way to speed up scientific computation.  However, the scientific computing community still lacks the necessary tools and expertise to fully harness the power of quantum computers in the execution of complex applications such as scientific workflows. In this work, we describe the main characteristics of quantum computing and its main benefits for scientific applications, then we formalize hybrid quantum-classic workflows, explore how to identify quantum components and map them onto resources. We demonstrate concepts on a real use case and define a software architecture for a hybrid workflow management system. 
\end{abstract}

\end{frontmatter}

\section{Introduction}
Scientific computing is a branch of computer science spanning different disciplines (i.e., biology, chemistry, engineering), whose goal is the development of standardized and accurate simulations of different phenomena. Scientific computation is typically modeled by scientific workflows, whose execution is managed by Workflow Management Systems (WMSs) (e.g., Pegasus~\cite{Deelman2019}, ASKALON~\cite{askalon-ref}, Airflow~\cite{haines2022workflow}). The importance of scientific workflows has been proven by the Nobel-Prize-winning research on gravitational waves, which employed LIGO data analysis workflows managed by Pegasus WMS\footnote{\href{https://today.usc.edu/nobel-prize-winning-discovery-on-gravitational-waves-came-about-with-contributions-from-usc-scientists/}{Nobel Prize-winning discovery on gravitational waves came about with contributions from USC scientists
}}~\citep{PhysRevLett.118.221101}, and their broad applicability to critical fields~\cite{taylor2007workflows} such as drug design~\cite{vanhaelen2017design}, material sciences~\cite{stein2019progress}, and simulations of the spread of Covid-19~\cite{ozik2021population}. The increasing complexity of scientific workflows calls for increasing computing power, which is provided by HPC clusters~\cite{Abbott_2017}. 

Current HPC systems are faced with the end of Moore's law~\cite{shalf2020future}. This means that we cannot increase the computing power at the same rate as before. Consequently, HPC researchers are considering alternative forms of computing to satisfy the increasing demands of scientific applications, enabling the transition to Post-Moore scientific computing~\cite{ashby2010opportunities}. 

In the landscape of Post-Moore computing, quantum computing promises substantial performance improvement~\citep{Ang2022, 10.1145/3458817.3487399}. 
Quantum computing can increase application performance, due to the proven theoretical speedup for different scientific problems~\cite{givi2020quantum,liu2021rigorous} and its native modeling of many scientific phenomena~\cite{Chang2021}. However, despite the theoretical speedup, the current state-of-the-art hardware is bound by the following shortcomings (1) limited availability of hybrid resources, (2) susceptibility of qubits to noise and errors in the \emph{Noisy Intermediate-Scale Quantum} (NISQ) devices~\citep{Aaronson2015-wh, Cheng2023-ra, Preskill2018quantumcomputingin}, (3) limited technical capabilities and engineering shortcomings at the hardware level: several requirements, such as highly-controllable qubits (high-fidelity state preparation and qubit register initialization), large counts of quantum gates (for e.g. deeper quantum circuits have higher CNOT (controlled-NOT) counts, which contribute to larger error rates as compared to single-qubit gates) operating within the coherence limits of the qubits (for e.g. shorter gate times and efficiency), and considerably large circuit depths (starting from qubit initialization to the final measurement). In order to run quantum algorithms (cryptosystems-based algorithms, such as Shor's integer factorization algorithm for breaking the RSA-2048 scheme~\cite{365700}) require large-scale logical qubit devices or alternatively physical qubits and quantum gates ranging between thousands to millions~\cite{Gidney2021howtofactorbit}, and (4) challenges in suppressing errors, for e.g. protecting entanglement between logical qubits~\cite{Cai2024-nb} (fault-tolerant quantum computation) arising due to the lack of fault-tolerant logical algorithms/quantum error correction (QEC) schemes at the experimental level. 

A broad category of workflows tasks can be propelled by utilizing quantum processors. These range from (a) accelerators that are interoperable with classical architechtures, resulting in hybrid quantum-classical systems~\cite{stein2021hybrid} or neuromorphic architectures~\cite{10.1063/5.0020014}, (b) stochastic and probabilistic sampling methods such as Monte Carlo (MC) estimation, (c) Programmable array of qubits, for synthetic simulation of other quantum systems, also known as \emph{quantum simulators}, e.g., quantum phases of matter and critical dynamics of many-body systems~\citep{Keesling2019-br, Ebadi2021}. 

In the field of gate-based NISQ, the most influential paradigm subclass of hybrid models are the \emph{Variational Quantum Algorithms} (VQAs)~\cite{Cerezo2021-xn}, where classic and quantum hardware are tightly copuled and cooperate in the achievement of a specific task. VQAs are turning out to be one of the much anticipated workhorses in the hybrid computation arena. These include fluid dynamics, quantum chemistry simulations, for e.g., accurate calculations of electronic structure using Hartree-Fock methods~\cite{Quantum2020}. Molecular dynamics (MD) is another highly suitable usecase, for e.g., simulating weakly bound, coarse-grained intermolecular interactions and groundstate determination~\cite{PhysRevA.105.062409}. 

In this work, we investigate the problem of executing scientific workflows on hybrid quantum-classical ecosystems. First, we identify and formalize the main actors involved in the process. Based on our model, we design a hybrid quantum-classic workflow starting from a classic molecular dynamics workflow designed for Pegasus WMS. Then, we provide an idea of how to allow the execution of scientific workflows on hybrid quantum-classic systems and identify challenges and possible solutions. Finally, we provide an outlook on the field and identify possible trends for future research in the area. 

We focus on scientific applications, which provide major opportunities for quantum modeling and quantum speedup~\cite{stein2021hybrid}. Also, we consider Pegasus~\cite{Deelman2019}, a well-known WMS, as the reference architecture for WMSs. 

The paper is organized as follows: first, we analyze related work in Section~\ref{sec:related}, then we provide the theoretical foundations of our work in Section~\ref{sec:background} (Appendixes provide additional background in quantum computing). In Section~\ref{sec:classic-hybrid}, we provide the definition of hybrid quantum-classic workflows and how to transform a classic workflow into a hybrid quantum-classic workflows. In Section~\ref{sec:usecase}, we describe our molecular dynamics simulation use case as a running example of the transformation from classic to hybrid classic-quantum workflows. We then describe our vision of hybrid workflow execution in Section~\ref{sec:hybrid-wms}, while in Section~\ref{sec:challenges} we identify the challenges that must be tackled to enable it. Finally, we conclude our paper in Section~\ref{sec:outlook}. 

\section{Related Work}
\label{sec:related}
Quantum computing has been first theorized by Feynman~\cite{Feynman1982-uz}. Advantages and ideas for quantum supremacy over superconducting qubits are described in~\cite{Arute2019}, while~\cite{Leymann2020} describes the circuit-based model of computation. Similarly,~\cite{Kielpinski2002} focuses on ion-traps, while~\cite{Willsch2022} describes D-Wave quantum annealers.

In~\cite{weder2022analysis}, a first study on how to transform classic workflows into hybrid classic/quantum workflows is performed. In particular, it focuses on machine learning applications~\cite{vietz:22} and also describes methods for the identification of quantum candidates and splitting scientific applications between classic and quantum hardware. Applications of quantum computing to scientific applications can be found in many domains, ranging from drug design~\cite{drugquantum}, molecular dynamics~\cite{Cranganore2022}, financial modelling~\cite{orus2019quantum}, manufacturing industry~\cite{manifacturingquantum}, linear optimization~\cite{PhysRevLett.103.150502}, and healthcare~\cite{healthcarequantum}. In the above examples, scientific workflows are not considered.  

VQAs, one of the typical applications of hybrid classic-quantum systems, are described in~\cite{McClean_2016}. In~\cite{cerezo2021variational}, different applications of VQAs are described. Still in the context of variational quantum algorithms,~\cite{Peruzzo2014-uw} focuses on photonic quantum platforms, while~\cite{Tilly2021,Tilly2022-fv} focus on the variational quantum eigensolver, that is a common task in scientific computations. Applications of VQAs can be found in different scientific applications, such as molecular dynamics~\cite{Cranganore2022}, accelerating machine learning workloads~\cite{liu2021rigorous} and combinatorial optimization~\cite{McClean_2016}. However, few works address the integration of VQAs in scientific workflows.  

Efforts in standardizing hybrid applications are performed also from a software engineering perspective. In~\cite{destefano:22}, a survey about the state of the art in quantum software engineering is performed. Works like~\cite{weder:22} focus on the software development cycle for quantum applications. Other approaches~\citep{atkinson2019quantum, Davis2020TowardsOT} focus on automatic synthesis of quantum programs.

Classical workflow management systems such as Pegasus, ASKALON, and DagsHub\footnote{\url{https://dagshub.com/}} are described respectively in~\citep{pegasus-ref,askalon-ref}. Execution of workflows in hyper-heterogeneous architectures is described in~\cite{maio2020a}. Also,~\cite{silva2017a} proposes a characterization of workflow management systems for data-intensive applications. Support for quantum workflow is provided in tools such as Orquestra~\cite{orquestra} and Covalent~\cite{covalent-ref}. In this work, we provide guidelines on how to adapt existing classical scientific workflows into hybrid quantum-classical scientific workflows and how to extend existing classical WMS to integrate both quantum and classical hardware.

We extend the outlined works by focusing on the integration of quantum machines in the execution of HPC applications. We generalize the concept of hybrid quantum-classical workflows, defining different execution models for hybrid quantum-classical applications and validating our findings on a real-world molecular dynamics simulation workflow. Based on our findings, we identify open challenges and possible solutions for the integration of quantum devices in HPC applications. 

\section{Background}
\label{sec:background}

\subsection{Scientific Workflows}

Scientific Applications in different domains (i.e., finance, biology, chemistry, engineering) can be decomposed in elementary \emph{tasks} (i.e., aggregate data from different sources, average a set of samples, apply a method to a specific dataset). Tasks can be combined into \emph{workflows}, represented as  directed acyclic graphs (DAGs)~\cite{pegasus-ref,askalon-ref,dagshub-ref} where nodes represent the tasks and edges model data and control dependencies between tasks. 
\begin{definition}[Scientific Workflows]
    A workflow $W$ can be formally defined as a DAG $W = (T, E)$, such that $T$ is a set of tasks and $E$ the set of edges, with $E \subset T \times T$.\label{definition:classic-wf}
\end{definition}

Workflows modelling scientific applications are called \emph{scientific workflows}. Workflows and tasks can be stored in public repositories (i.e., Pegasus workflow gallery\footnote{\url{https://pegasus.isi.edu/workflow_gallery/}}), allowing re-use of validated code, \emph{repeatability} of simulation, (possibility to easily repeat the setup and execution of a simulation), which increases confidence in simulation's results, and reproducibility of computation (possibility to reproduce and verify results of computation), creating opportunity for new insights and reducing measurements errors. Also, workflows are fundamental for the development of \emph{standardized}, \emph{robust}, and \emph{accurate} simulations of different phenomena.

\subsection{Workflow Management Systems}
Execution of scientific workflows on HPC infrastructures requires different software layers, to enable (1) scheduling of workflow tasks onto different computing resources, (2) management of data, including intermediate data products (either streaming data, or scientific datasets), (3) interoperation between different heterogeneous resources (e.g., Cloud/Edge nodes, academic clusters), and (4) fault tolerance (e.g., checkpointing of execution, re-execution of tasks). 

\subsection{Quantum Computing}
Recent years have seen a major boom in the areas of quantum information processing and quantum technologies. The huge surge in academic interest and industrial investment in quantum happened more or less after the seminal publication by \texttt{Google Inc.} on \emph{Quantum supremacy using a programmable superconducting processor}~\cite{Arute2019}. Along with neuromorphic computing architectures, quantum computation, and quantum simulation have emerged as some of the most promising paradigms in alternative computing architectures. Multiple quantum platforms based on superconducting qubits like the IBMQ universal quantum computer~\cite{Leontica2021}, programmable atomic arrays~\cite{Ebadi2021}, trapped-ion quantum computers~\cite{Kielpinski2002}, \emph{D-Wave} 2000Q and 5000Q quantum annealers~\cite{Willsch2022} exist today with the promise of accelerating a wide range of problems that would typically be impossible to solve or simulate on a classical (hardware and software) computers. Some of these include Shor's groundbreaking integer factorization  algorithm~\cite{365700} offering superpolynomial speedup, Harrow-Hassidim-Lloyd (HHL) algorithm for solving a large system of sparse matrices with exponential speedup~\cite{PhysRevLett.103.150502}, and accelerated linear algebra computations like matrix multiplication~\cite{Zhang2016}. Other highly relevant domains include combinatorial optimization (NP-hard problems), finance, machine learning~\citep{Biamonte2017, RevModPhys.91.045002, Liu2021}, battery design, new novel molecule and drug discovery, quantum materials, grid power management to name a few. 

Although the potential to accelerate time to solution using quantum machines is huge, quantum computation is still in its very beginning, suffering from many hardware and software imperfections. Currently, the technology for initial state preparation of quantum registers, precise qubit control, high-fidelity quantum gate preparation, and measurement of qubits involves a high level of uncertainty. These can be traced mainly to ultra-precise engineering bottlenecks and environmental errors induced by \emph{decoherence}. For example, the measured fidelity of 2 million samples on the 53-qubit \emph{Sycamore} chip fabricated by \texttt{Google} Inc.~\citep{Arute2019, zlokapa2020boundaries}, described in terms of the linear cross-entropy benchmark (XEB), is only at a level of 0.2 \%~\citep{Arute2019, 10.1145/3458817.3487399}. This does not keep up with the performance of classical simulators, which can be exponentially complex, but provide higher fidelity compared to quantum devices. 

\subsection{Hybrid Quantum-Classic Systems}
\begin{figure}[!h]
    \centering
    \includegraphics [width=0.95\columnwidth]{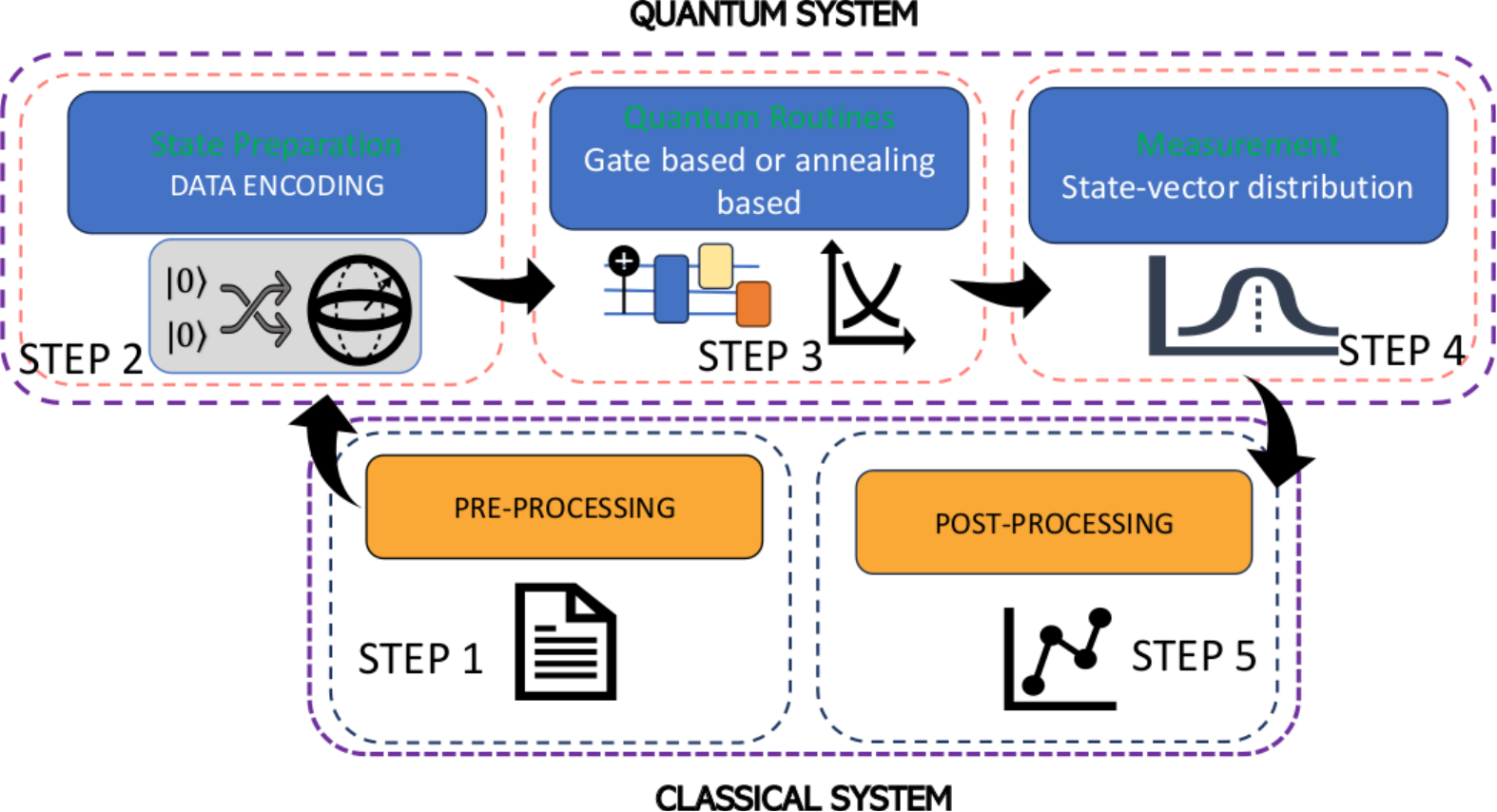}
    \caption{Schematics of hybrid quantum-classical systems.}
    \label{fig:hybrid_systems}
\end{figure}


Hybrid quantum-classic systems define a new class of computing paradigms combining the computing capacities of near-term noisy quantum processors and classical co-processors working in conjunction to solve large-scale scientific problems. The main advantage of using this approach is that it allows the exploitation of the many strengths of classic processors for multiple tasks (e.g., convex optimization, error correction, data pre/post-processing) while at the same time utilizing the capabilities of quantum machines for other specific tasks~\cite{stein2021hybrid}. The motivation behind hybrid ecosystems is to orchestrate quantum algorithms with classical routines which are more suited and efficient on classical processors (for e.g. classical optimizers for back-propagation computation, data entry, graphics, data pre/post-processing etc.). This allows for the distribution of larger workloads to classical devices, thereby mitigating the burden on error-prone quantum hardware, which is leveraged for specialized and targetted tasks (for instance, quantum phase estimation (QPE), cryptographic schemes, operator quantum expectation value computation, optimization, etc.). This strategic integration yields a substantial reduction in the utilization time of the quantum resources, enhancing the efficacy of task executions. Figure~\ref{fig:hybrid_systems} describes the hybrid system pipeline: In step 1, data is pre-processed on the classic system for further encoding onto the quantum registers; in Step 2, the quantum state is prepared based on preprocessed input (typically done using data encoding schemes), Step 3, manipulates a quantum circuit; in step 4, the quantum state is measured and post-processed in step 5.

\textbf{Data Encoding For Quantum Devices}: Classical data in its raw form cannot be processed on quantum devices. Desired initial quantum \emph{state preparation} necessitates converting and encoding the input classical vector (tensor) data in a suitable representation as \emph{quantum data} for embedding, storing the quantum information in the QPUs, and performing quantum operations via quantum algorithms. Porting classical data sets onto quantum devices can be achieved efficiently using multiple \texttt{DATA ENCODING} schemes~\citep{weigold2021, PhysRevA.102.032420}. In general, choosing a particular data encoding method depends on the use case (model/algorithm dependent). Some of the well-known encoding frameworks are (i) \texttt{Basis} Encoding, (ii) \texttt{qRAM} Encoding, (iii) \texttt{Angle} Encoding, (iv) \texttt{Amplitude} Encoding, etc. 

\textbf{Parametrized Quantum Circuits}: Parametrized quantum circuits (PQCs) or variational circuits are basically quantum algorithms that vary certain variables (real/complex valued vectors) or parameters, often denoted as  $\boldsymbol{\vartheta}$. PQCs like any other quantum circuit consists of, (1) Initialized qubit register (cf. Appendix B, section(\ref{quantum-registers}) with appropriate quantum state preparation, (2) a quantum circuit with a cascade of quantum logic gates (cf.  \hyperref[quantum-gates]{Appendix C} and \hyperref[complex-quantum-circuits]{Appendix D}) $U(\boldsymbol{\vartheta})$, parameterized by a set of  parameters $\boldsymbol{\vartheta}$ (3) classical optimizers augmenting the PQCs and, (4) measurements and resets.

\textbf{Variational Quantum Algorithms} (VQAs) are one of the most important paradigms in hybrid quantum-classical systems and are prime candidates for quantum advantage. VQAs can be defined as hybrid quantum-classical algorithms, wherein a parametrized quantum circuit is iteratively optimized via classical optimization algorithms. The schematic diagram Figure~\ref{fig:hybrid_systems} shows that the black box performs VQA executions. The black box can be decomposed into two major blocks, namely the quantum block and a classical block, interconnected by an underlying adaptive feedback-loop mechanism. 

\begin{itemize}
    \item \textbf{Quantum System}: A \emph{Noisy-Intermediate-Scale-Quantum} (NISQ) device that at the low level prepares highly entangled parameterized quantum states and performs 
    \newline 
    quantum-subroutines using PQCs (variational circuits). For e.g. this block executes a \emph{forwardpass} by computing the quantum expectation value of certain physical observables
    \newline 
    /operators (matrices) and the measured quantum state yields the corresponding parameter values which are stored in the memory at every intermediate step.
    
    \item \textbf{Classic System}: A \emph{classical} optimizer which receives the quantum outputted parameters and executes iterative optimization, (for e.g. gradient descent) by optimizing the cost (loss) function landscape. Calling gradients functions for parameter $\boldsymbol{\vartheta}_i$ update is done using \newline  
    \emph{back-propagation} (accumulated gradients). The data flow in this stage corresponds to passing the updated parameters loop back into quantum circuit for further quantum function calls (gate operations) and initiate subsequent control flow steps.
\end{itemize}

\textbf{Other Hybrid Algorithm Frameworks} 
Apart from VQAs, there are other classes of hybrid algorithms, typically analog-based, that leverage existing quantum architectures. These include hybrid forms of quantum annealing, such as hybrid solvers in D-WAVE machines for solving arbitrary structure QUBO problems (quadratic models, such as Binary Quadratic Models (BQMs), Constrained Quadratic Models (CQMs) or Unconstrained Quadratic Models, Discrete Quadratic Models (DQMs)) etc)\footnote{\url{https://docs.dwavesys.com/docs/latest/doc_leap_hybrid.html}}. Frameworks such as \emph{iterated} adiabatic reverse annealing which are quantum annealing-based techniques embedded in a classical loop have also been useful in tackling multiple scientific and industrial use cases~\cite{PhysRevA.106.010101}. Some more ongoing developments in the hybrid algorithms sectors are \emph{Quantum Neuromorphic Computing}, wherein brain-inspired classical neural network architectures are conjoined with quantum hardware to offer computational advantage~\cite{10.1063/5.0020014}. 

\section{From Classic to Hybrid Workflows}
\label{sec:classic-hybrid}
In this section, we provide the main definitions of what is needed to enable our vision of hybrid workflow execution, focusing on hybrid workflows and then on hybrid WMSs.

\subsection{Hybrid Workflows}
\label{sec:hy}
\begin{figure*}[!ht]
\includegraphics[width=\textwidth]{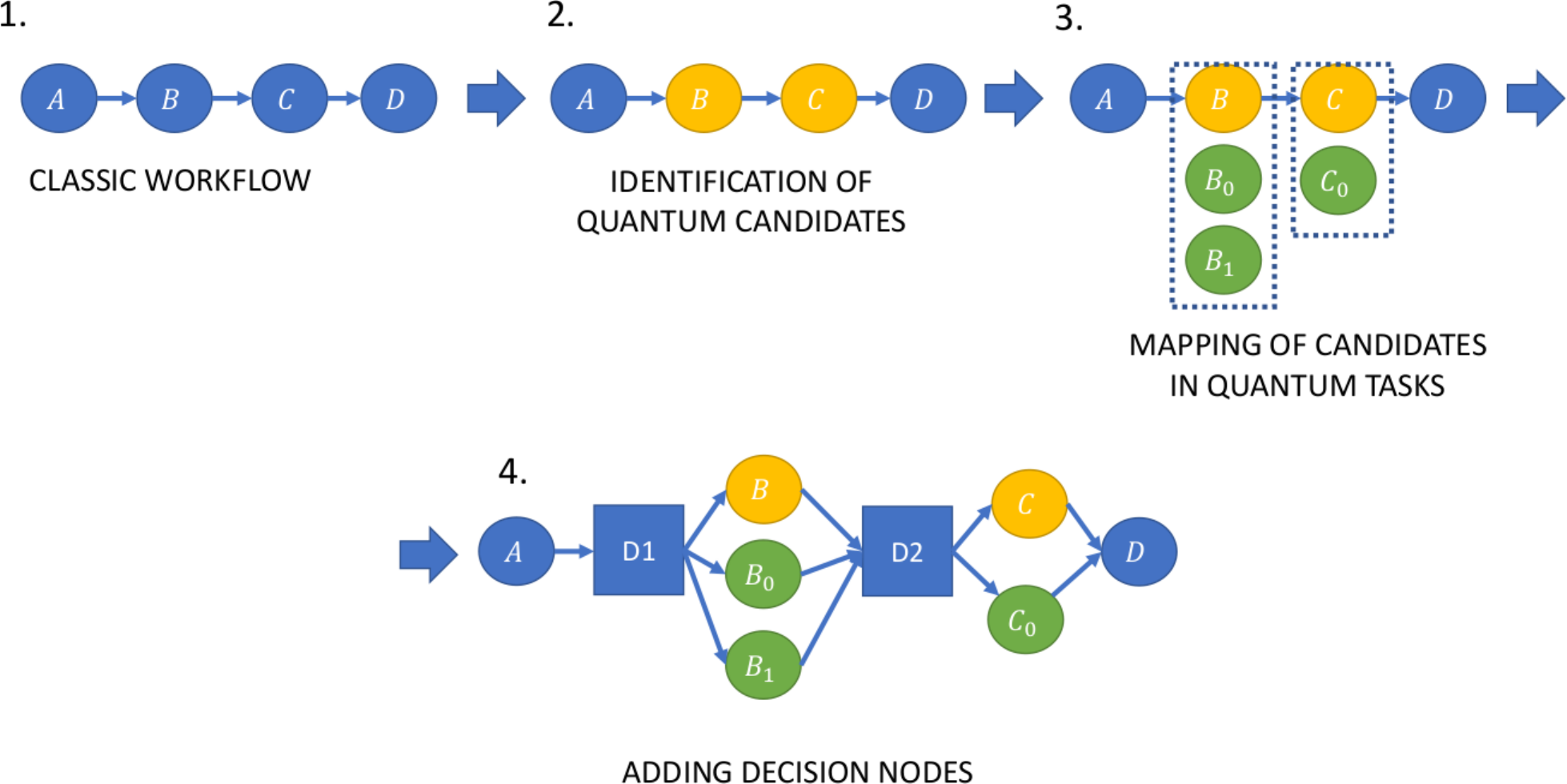}
\caption{Example of Transformation into Hybrid Workflow}
\label{fig:hybrid-wfs}
\end{figure*}
We extend the definition of classic workflows by adding a set of \emph{quantum} tasks $Q$ to Definition~\ref{definition:classic-wf}. Quantum tasks $q \in Q$ that can be executed only by quantum machines. Also, quantum tasks in our workflow definition are \emph{functionally equivalent}~\cite{Shashidhar2005equivalence} to some tasks in $T$. Because of the undecidability of the program equivalence problem~\cite{Shashidhar2005equivalence}, we assume that the user defines a mapping function $f: T \mapsto Q$ that maps a classic task into its quantum equivalent. Multiple quantum tasks can be available for different classic tasks: for example, for the classic task of computing a matrix eigenvalue, either HHL~\cite{zhang2022improved} or VQE~\cite{cerezo2021variational} algorithm can be used, depending on available quantum hardware. As a consequence, $f$ is surjective, but not injective. We define \emph{quantum candidates} tasks as tasks for which there is a quantum task, namely, 
\begin{equation}
\forall t \in T : f(t) \neq \emptyset, t~\text{is a quantum candidate}.
\end{equation}
We define the set $T'$ as the set of quantum candidates. We assume that $T' \subseteq T$. If a task $t'$ is a quantum candidate, we add a \emph{decision} node between $t'$ predecessor and connect it to $t'$ and all other tasks $q \in f(t')$. Different execution paths will be executed according to \emph{conditions} or specified by the decision node. An example of a condition could be, for example, to execute the quantum task if quantum hardware is available. The set of decision nodes is defined as $D$. Figure~\ref{fig:hybrid-wfs} provides an example of how to transform a classic workflow into a hybrid workflow.

\begin{figure*}[!ht]
    \centering
    \begin{subfigure}{.49\textwidth}
    \includegraphics[width=\textwidth]{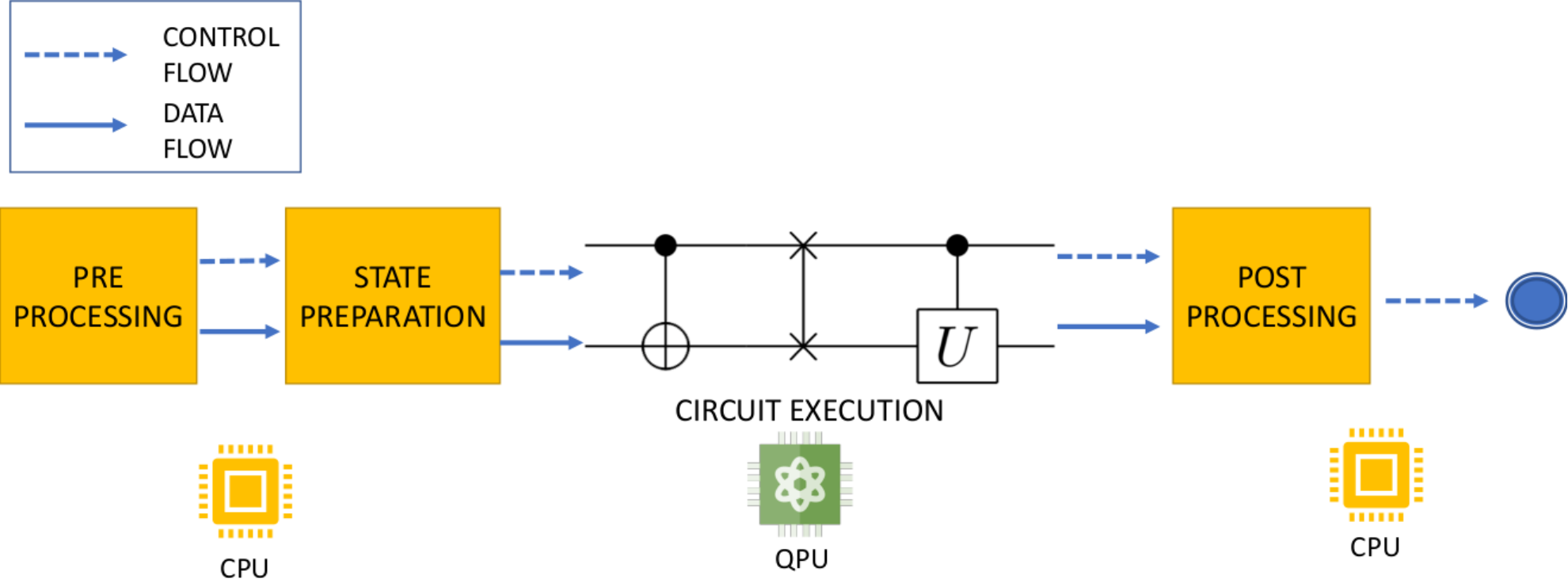}
    \caption{Circuit Execution}
    \label{fig:circuit-execution}
    \end{subfigure}
     \begin{subfigure}{.49\textwidth}
    \includegraphics[width=\textwidth]{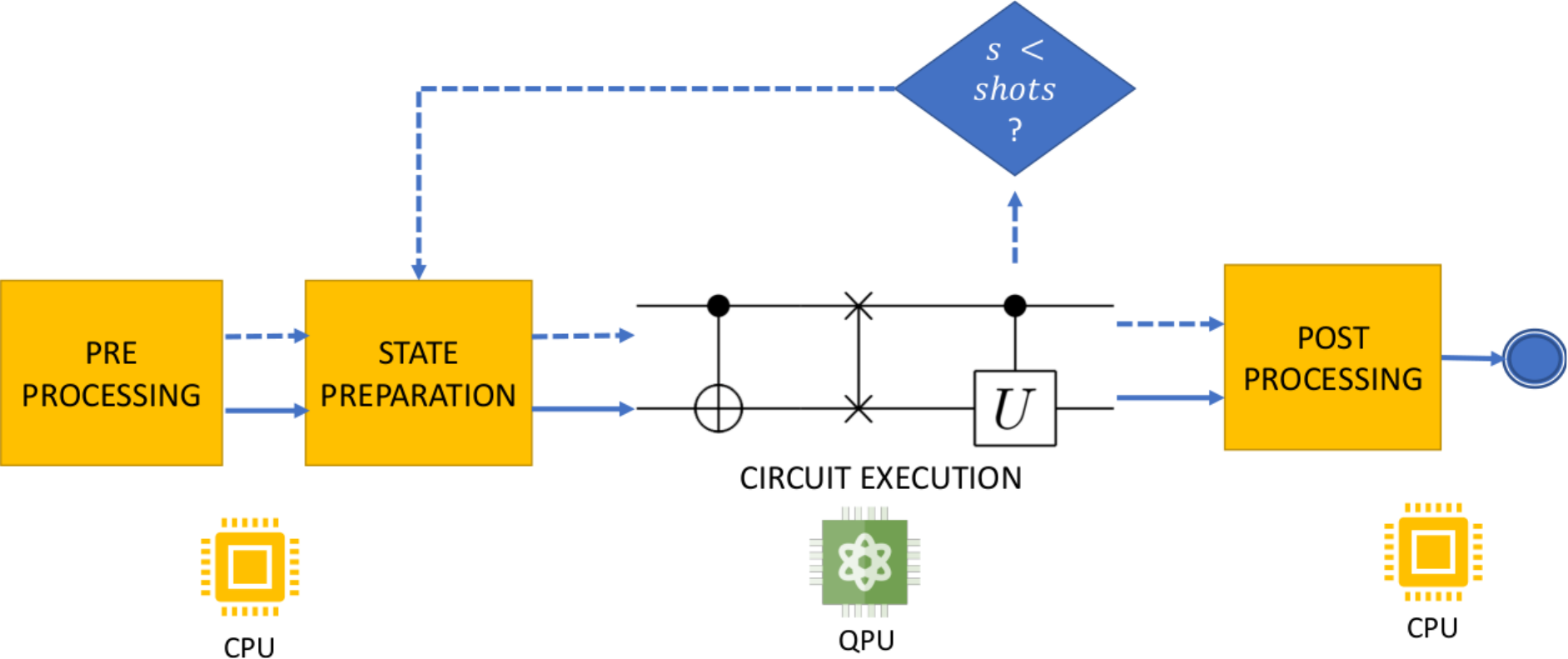}
    \caption{Task Execution}
    \label{fig:task-execution}
    \end{subfigure}
     \begin{subfigure}{.49\textwidth}
    \includegraphics[width=\textwidth]{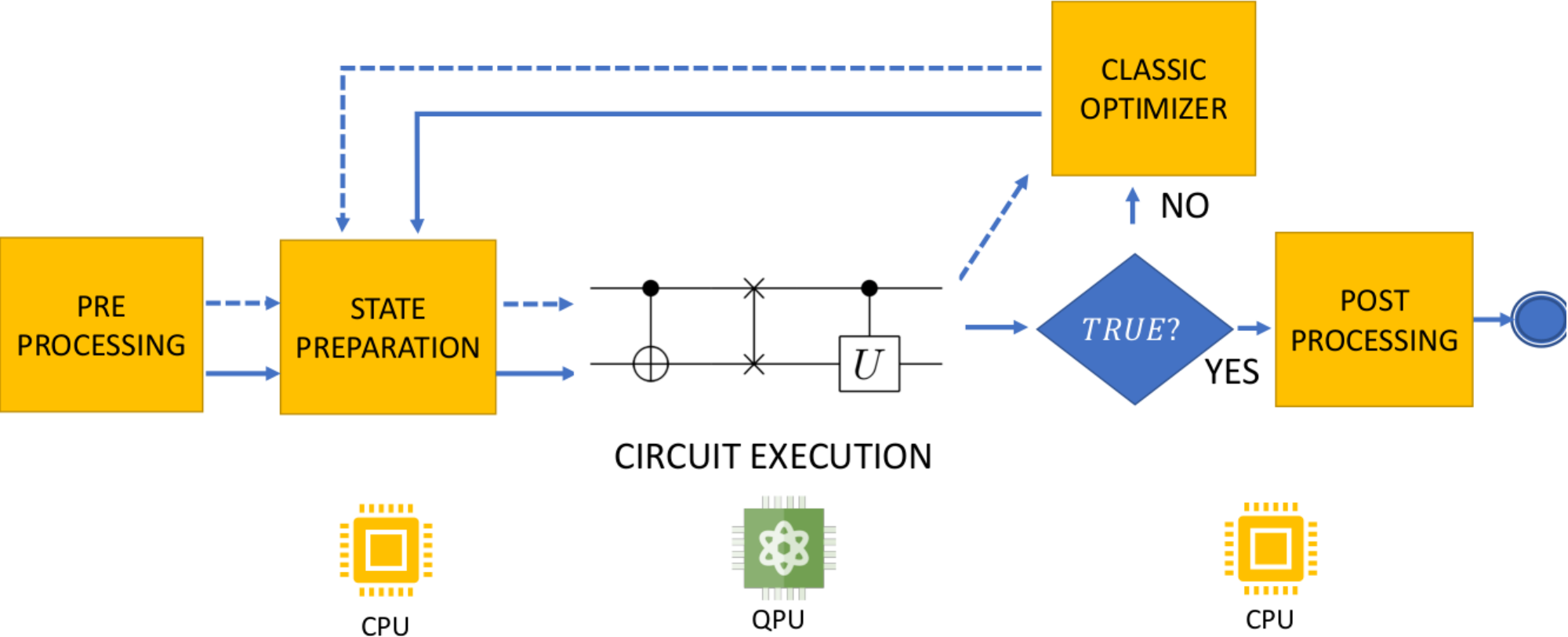}
    \caption{Hybrid Execution}
    \label{fig:hybrid-execution}
    \end{subfigure}
    \caption{Types of Quantum Tasks.}
    \label{fig:quantum-tasks}
\end{figure*}

We distinguish three types of quantum tasks, that are described in Figure~\ref{fig:quantum-tasks} together with their control and data flow: \emph{circuit execution}, \emph{task execution} (Figure~\ref{fig:task-execution}, and \emph{hybrid execution}. A circuit execution (Figure~\ref{fig:circuit-execution}) represents a single execution of a quantum circuit, where a single sampling of the result of execution is performed. Examples of this computation can be simple algorithms, such as the simulation of a coin toss. The result of a task execution, instead, is the most frequent result among $s$ samplings of the execution of a quantum circuit and represents the execution of typical quantum algorithms such as Grover and Deutsch-Jozsa. This model can also be applied to the quantum subroutines of hybrid algorithms, such as Shor's algorithm~\cite{365700}. Finally, a hybrid execution (Figure~\ref{fig:hybrid-execution}) represents computations involving interaction between classic and quantum hardware, similar to what happens with Variational Quantum Algorithms~\cite{cerezo2021variational}, where a quantum state modeling the solution to a specific problem is modeled as a parametrized quantum circuit with a vector of parameters $\vec{\Theta}$, whose optimal values are found by optimizers running on classic hardware. Finally, we define hybrid workflows as follows:

\begin{definition}[Hybrid Workflows]\label{Def:HybridWorkflows}
We define hybrid workflows as $W=(T,Q,E,D,f)$, where:
\begin{itemize}
    \item $T$ is the set of classic tasks, where $T' \subset T$ is the set of quantum candidates;
    \item $Q$ is the set of quantum tasks;
    \item $D$ is the set of decision nodes, where $|D| = |T'|$;
    \item $E \subset (T \cup Q \cup D) \times (T \cup Q \cup D)$ is the set of edges; 
    \item $f$ is the function mapping tasks in $T'$ in one or more quantum tasks in $Q$.
\end{itemize}
\end{definition}

\section{A Molecular Dynamics Use Case}
\label{sec:usecase}
MD is one of the most popular scientific applications executed on modern HPC systems. MD simulations reproduce the accurate dynamics  (time evolution of molecular systems) at a given pressure and temperature by iteratively computing interatomic forces and atom dynamics over short time steps. The trajectories generated by these simulations enable a better understanding of conformations and molecular mechanisms. In particular, a trajectory is a series of frames, i.e. sets of atomic coordinates stored at fixed infinitesimal time steps~\cite{Do2021}.

The quantum adaptation of MD applications builds over a similar pipeline as put forth in the recent works of~\cite{9041757, Do2021}; dealing with \emph{in-situ} and \emph{in-transit} analytics of MD simulations on state-of-the-art supercomputers. Our hybrid workflows can be efficiently integrated with the in-situ and in-transit analysis of
the data and meta-data generated by MD simulations. 
\medskip 

Here, we provide a high-level description of the hybrid workflows pertaining to the aforementioned use case. The motivation to focus on this particular use case is as follows: (a) MD simulations/analyses are a very active field in the distributed deep learning and HPC scientific computing communities with widespread applications to industry. (b) It consists of a purely quantum-based algorithm stack (see Section~\ref{sec:usecase:targetI}) followed by an additional variational hybrid algorithm stacks (VQEs) on the quantum algorithm returned output (see Section~\ref{sec:usecase:targetII}).  (c) It is elusive to find applications such as in~\citep{9041757, Do2021}, that offer one-to-one mapping of the entire classical problem (two major compute-intensive tasks) on a hybrid ecosystem and capture the essence of the hybrid workflow pipeline.
\medskip 

Our initial observation was that specific tasks of the \textit{in-situ} MD simulations, for e.g., collective variables (CVs) generation, can be leveraged using state-of-the-art quantum algorithms and quantum subroutines instead of merely parallelizing workloads using GPU accelerator facilities. Just in the way  GPUs, TPUs, and FPGAs serve as indispensable tools for accelerated linear algebra for variable dimension tensor calculations (matrix-matrix, matrix-vector, vector-vector multiply), the same logic also carries over to devices based on quantum architectures.  It is clear from the current status of quantum computers that a complete MD application cannot be executed on the rudimentary NISQ quantum hardware. Thus, we need to identify the best-suited parts (subspaces) of the workflows that can be accelerated multifold using quantum processing units (QPUs).  Moreover, the utilization time of quantum devices should be kept low-key, in order to prevent noise-induced quantum errors and imperfections of quantum hardware. 

\begin{figure*}[!h]
    \centering
    \begin{subfigure}{\textwidth}
        \centering
        \includegraphics[width=.65\textwidth]{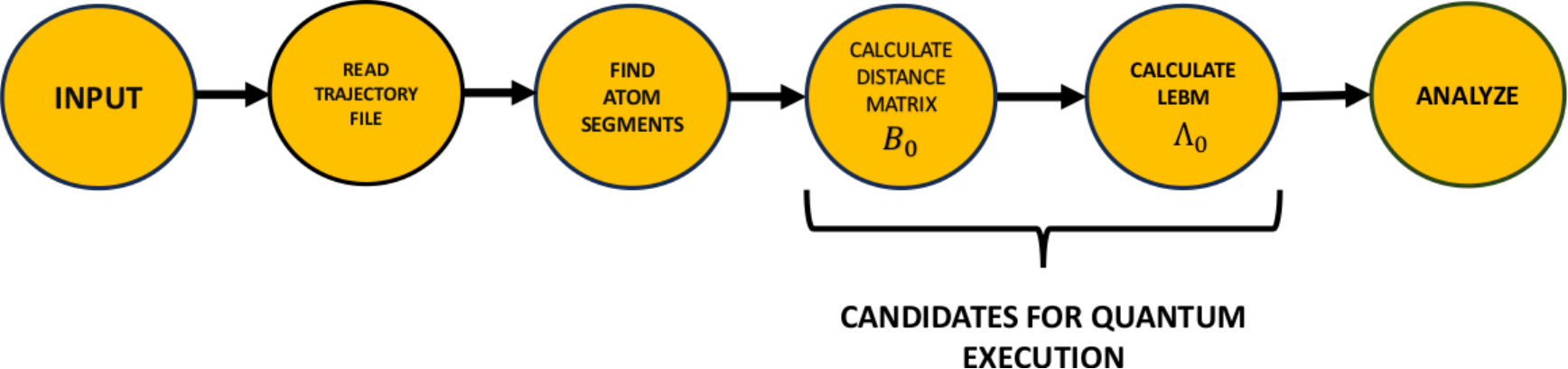}
        \caption{Target Classical MD Simulation.}
        \label{fig:classic-md}
    \end{subfigure}
    \begin{subfigure}{\textwidth}
        \centering
        \includegraphics[width=.65\textwidth]{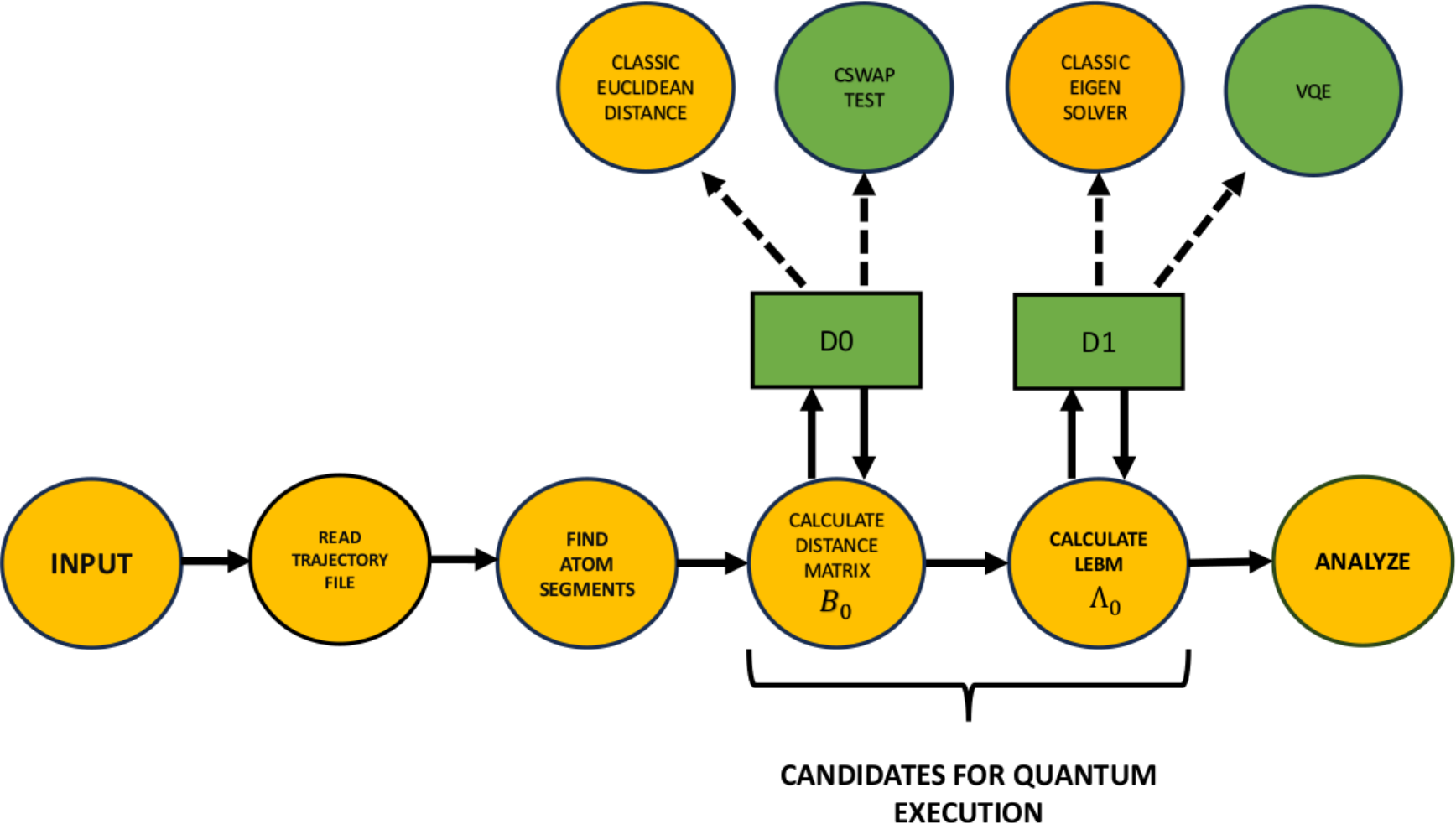}
        \caption{Hybrid MD Simulation.}
        \label{fig:hybrid-md}
    \end{subfigure}
    \caption{Hybrid Quantum-Classical MD Simulation.}
\end{figure*}
Figure~\ref{fig:classic-md} visualizes the target MD application with quantum candidates highlighted. After reading input from the user, the application reads a trajectory file, which provides information about the molecule structure, and identifies the atom segments that have to be considered in our calculations. For each pair of atom segments, the application performs parallel computations to calculate $B_{IJ}$ matrices. The CV that we will use in this work is the \emph{Largest Eigenvalue of the Bipartite Matrix (LEBM)}. The bipartite matrices are used as an input for the LEBM calculation. At the end of this process, results for different $B_{IJ}$ are collected and analyzed.

Starting from classical MD workflow (Figure~\ref{fig:classic-md}), by applying the procedure described in Figure~\ref{fig:hybrid-wfs} we obtain the resulting Hybrid Quantum-Classical workflow described in Figure~\ref{fig:hybrid-md}. These quantum-classical (hybrid) counterparts of the purely classical MD workflows were first conducted and benchmarked on a 5-qubit IBM Q devices~\cite{Cranganore2022}.  In the next sections, we describe each step in detail.
 
\subsection{Identification of Quantum candidates} 

Strategies for pinning down suitable quantum candidates in a use case is to a large extent problem/model/system dependent. In a hybrid workflow environment, this basically boils down to choosing from a limited set of available quantum and hybrid quantum-classical algorithms that could potentially outperform known classical algorithms.
\medskip 

In the MD simulation landscape, compute-intensive data analysis of time-evolving molecular systems can be systematically solved by designing a suite of collective variables (CVs)~\cite{9041757}. Collective variables (CVs) are a set of statistical metrics that capture relevant molecular motions enabling efficient monitoring of rare events in huge molecular structures and chains~\cite{Kitao1999-pm}.  Technically, CVs can also be defined as a function of the atomic coordinates in one frame that helps to reconstruct the free-energy surface for enhanced sampling.  Since trajectories are reduced to time series of a small number of such CVs, simulated molecular processes are much more amenable to interpretation and further analysis. Ideally, a CV can be as simple as the distance between two atoms or can involve complex mathematical operations on a large number of atoms. For example, in the domain of Metadynamics~\cite{barducci2011mtd}, scientists use well-chosen collective variables (CVs) to capture important molecular motions in the region of interest. 
\medskip 

 The works in~\citep{9041757, Do2021} demonstrated that collective variables could be extracted using Euclidean distance matrices and bipartite distance matrices. 

      


The Euclidean distance matrix \(D\) and the bipartite distance matrix \(B_{\text{IJ}}\) have three fundamental properties: they are symmetric, diagonal elements are zeros, and off-diagonal elements are strictly positive~\cite{9041757}.
After calculating matrices $D$ and $B$, we focus on calculating the molecular system's CVs. 
\medskip 

Potential quantum candidates for MD-based experiments were chosen in accordance to:

(i) Replacing classical routines that are compute-intensive with equivalent quantum algorithms that guarantee a theoretical speedup. A task $t \in T$ is defined as \textbf{compute-intensive} if at least $70$\% of its execution time is spent in performing floating point operations (i.e., no data staging, I/O, and communication).  

(ii) Capturing the electronic properties and molecular time-evolution of the system which in turn can be used for efficient computation of collective variables (CVs),       

(iii) Most importantly, whether there exists such a quantum algorithm and routine, that is available to the user for successful implementation of the use case. If not, then engineering quantum algorithms/quantum circuits with a considerably shallow circuit depth, (with a modest qubit register size and quantum gates crammed into the circuits) such that the problem can be efficiently simulated on NISQ hardware.
\medskip 

The two classical tasks that exactly satisfy all the aforementioned requirements in our use case are: 

(a) Distance (bipartite) matrix generation between different $C_{\alpha}$ atoms in the molecular system (cf. Target Task 1~\ref{sec:usecase:targetI}) for details), which is calculated using SWAP test;

(b) Calculating target CV corresponding to largest eigenvalues of the bipartite matrices (LEBM) and distance matrices using hybrid quantum algorithms (cf. Target Task 2~\ref{sec:usecase:targetII}) for details). We use Variational Quantum Eigensolvers (VQEs)~\cite{doi:10.34133/2020/1486935} for eigenvalue estimation.

\subsection{Mapping from Classic to Hybrid Workflow}\label{sec:usecase:targetI}

In Figure~\ref{fig:hybrid-md} we show the hybrid workflow resulting from the transformation of the classic MD workflow depicted in Figure~\ref{fig:classic-md}, following the transformation procedure described in Figure~\ref{fig:hybrid-wfs}. Since we have two quantum candidates, we add two decision nodes, $D0$ and $D1$. Decision nodes are responsible for selecting target implementation, depending on whether quantum hardware is available, and performing data encoding from classic to quantum domain. We analyze the implementation of each algorithm in the next sections, together with the control and data flow of target hybrid MD simulation.
\medskip 

\textbf{Target Task 1: Quantum algorithm for distance matrix generation \& computing structural change evolution}\label{sec:usecase:targetI}


It was shown in~\cite{https://doi.org/10.48550/arxiv.1307.0411} that quantum computers possess the power to manipulate large numbers of high-dimensional vector/tensor datasets. Typically, vector operations involving vector (tensor) dot products, norms, overlaps, etc., feature in supervised and unsupervised machine-learning tasks. For our MD analytics use case, CVs described by distance/bipartite matrices constitute some of the most compute-intensive routines. Classical algorithms running on classical devices typically require polynomial time in the number of vectors and the dimension of the space to solve these tasks. For an $N$-dimensional vector space, their time complexity grows linear in N, i.e., $\mathcal{O}(\mathtt{N})$. Quantum computers, on the contrary, can remarkably achieve this feat in time $\mathcal{O}(\mathtt{logN})$, owing to their intrinsic capabilities to efficiently manipulate high-dimensional vectors embedded in large tensor product spaces~\cite{https://doi.org/10.48550/arxiv.1307.0411}.
\medskip 

The reason for this exponential speed-up can be argued as follows: classical data (typically vectors or tensors), expressed in terms of $N$-dimensional complex-valued vectors can be encoded through amplitude encoding onto merely $\mathtt{log_2 (N)}$ qubits, thus requiring only logarithmic in the size of the classical data-set. This data stored in a \emph{quantum random access memory} (\texttt{qRAM}) whose mapping takes $\mathcal{O}(\mathtt{log_2 N})$ steps~\cite{PhysRevLett.100.160501}. In this converted quantum form, data post-processing can be done using multiple quantum algorithms like the \emph{quantum Fourier transforms}~\cite{nielsen_chuang_2010}, matrix inversion methods~\cite{PhysRevLett.103.150502}, etc., with time-
\newline 
complexity $\mathcal{O}(\mathtt{poly(logN)})$. Thus, distance estimation and inner product operations between post-processed vectors belonging to N-dimensional vector spaces take merely time $\mathcal{O}(\mathtt{logN})$
\newline 
~\cite{https://doi.org/10.48550/arxiv.1307.0411}. Moreover, as per~\cite{https://doi.org/10.48550/arxiv.0910.4698}, sampling post-processed vectors and distance or inner-product determination between these post-processed vectors is an exponentially hard task.

Inspired by the aforementioned (theoretical)  \textit{exponential} 
\newline 
quantum speedup, we make use of quantum algorithms to generate the proxies for modeling molecular structure time-evolution, i.e., Euclidean distance matrices $D$ and bipartite distance matrices $B_{\text{IJ}}$. These can be efficiently generated via a quantum sub-routine called the \emph{SWAP} (also called \emph{C-SWAP}) test.

\medskip 

The high-level workflow description of the \textbf{Target Task 1} with quantum integrated architectures is depicted in Figure (\ref{fig:quantum-CV}). We can see that its execution resembles the control and data flow of a job execution, as shown in Figure~\ref{fig:task-execution}. Typically, it requires manual re-direction of tasks onto quantum devices for generating distance matrices. The proposed workflow schematics is a quantum adaptation (counterpart) of the \emph{in situ} or \emph{in transit} integrated classical workflows used for molecular dynamics (MD) analytics, originally introduced in~\cite{https://doi.org/10.1002/jcc.24729}.

\begin{figure*}[!ht]
    \centering
    \includegraphics [width=\textwidth]{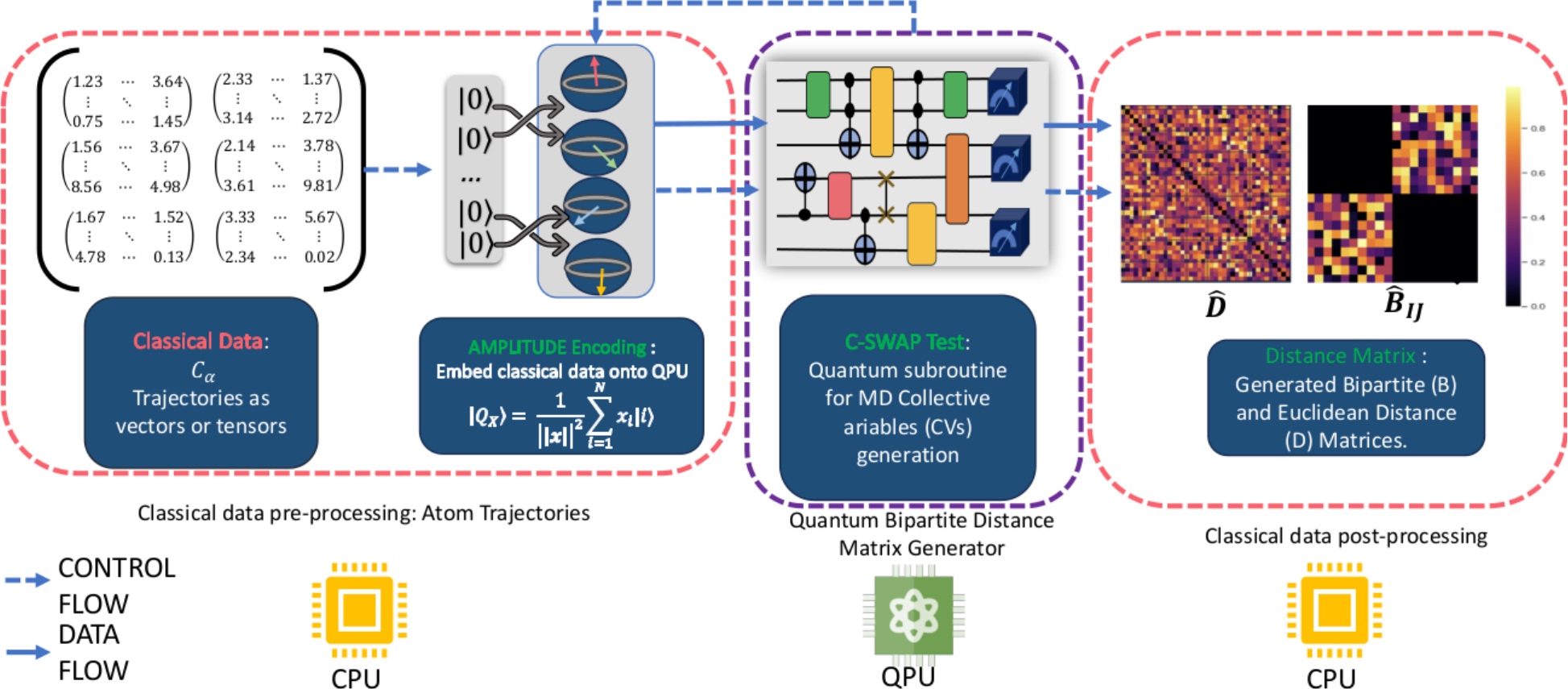}
    \caption{Quantum integrated MD workflows for distance matrix computations.}
    \label{fig:quantum-CV}
\end{figure*}
\medskip 

\textbf{Target Task 2}: \textbf{Largest Eigenvalues of Bipartite Matrix}\label{sec:usecase:targetII}
\textit{Hybrid Quantum-Classical algorithm for Eigenvalue Determination}:
It was first posited in the works of Johnston {\it et al.}~\citep{johnston2017isd, https://doi.org/10.1002/jcc.24729} that the measurement of largest eigenvalues of the bipartite matrices $B_{\text{IJ}}$ or the Euclidean distance matrices $D_{I}$ can serve as collective variables (CVs) and suffices for monitoring structural changes in the conformation of $I$ relative to $J$. Later, this idea was extrapolated in~\citep{Do2021, 9041757}, wherein the largest eigenvalues of each one of these matrices were computed without retaining other frames in memory. Nevertheless, with ever-increasing molecule sizes, numerical linear algebra calculations such as eigenvalue and eigenspectrum (typically using numerical diagonalization techniques) or singular value decomposition (SVD) computations may fall short due to the saturating computational power of classical (HPC) systems.

Hybrid quantum-classical systems offer a possibility to alleviate the aforementioned bottleneck by semi-utilizing quantum devices as accelerators orchestrated by classical optimization protocols for parameter updates. The \emph{variational quantum eigensolver} (VQE)~\citep{Peruzzo2014-uw, Tilly2022-fv, McClean_2016} form an important subset of the variational quantum algorithms (VQAs), that computes the \emph{eigenvalues} of typically large Hermitian matrices $\hat{A}$ (see Appendix F) using the \textit{Rayleigh-Ritz} variational approach~\cite{Ritz1909}. This heuristic algorithm was developed with a strong focus on solving the ground state of many-body interacting quantum systems (strongly correlated) using iterative numerical optimization. Multiple highly complex systems appearing in quantum chemistry remain in-tractable even for the capabilities of current leading-edge high-performance computing (HPC) systems. Augmenting NISQ devices or the near-future quantum devices to operational support given by supercomputers increases the hopes for a faster convergence to solutions for such large-scale chemistry target applications based on quantum simulations. ~\citep{cerezo2021variational, broughton2021tensorflow, Babbush2019}


The LEBM computation of the target matrices $B_{\text{IJ}}$ or $D$ falls as a fitting use-case for the variational quantum eigensolver engine. The machinery is described in detail in Appendix F. Initially one begins with an ansatz wavefunction (trial state) outputted from a parameterized quantum circuit. The core principle operating under the hood is iteratively updating the wavefunction parameter whilst minimizing the \emph{cost function} Eq.(\ref{eq:background:vqa:cost}) $\mathcal{C}(\boldsymbol{\theta})$ by employing a classical optimizer. This cost function is typically chosen to be the quantum expectation value of a given hermitian matrix with respect to the parametrized wavefunction $\ket{\Psi(\boldsymbol{\theta)}}$. The input to the VQE engine is either the C-SWAP quantum-subroutine generated bipartite matrix or generated bipartite matrices generated using classical machines.  (which is Pauli encoded, since it is a Hermitian in nature cf.~Appendix E), the corresponding function to minimize reads, 

\begin{equation}
    \vartheta^* = \operatorname*{arg\,min}_{\boldsymbol{\theta}} C(\boldsymbol{\theta}) = \bra{\psi(\boldsymbol{\theta})}\hat{B}_{\text{IJ}}\ket{\psi(\boldsymbol{\theta})},
    \label{eq:background:vqa:cost}
\end{equation}
\medskip 
The goal is to approach the sets of LEBM as close to the actual values, i.e., the values calculated on a classic machine. However, since VQE performs iterative optimization of a cost function $C$ (Equation~\ref{eq:background:vqa:cost}), it can approach the exact value only after multiple executions or iterative loops~\cite{cerezo2021variational}. To this end, we define the error function that we use to quantify the VQE benchmarks.

Our quantum-enhanced MD simulations were conducted on a $5$ qubit IBMQ hardware. The maximum permissible matrix dimension possible with our requested quantum resources was a restrictive $16 \times 16$ dimensional distance matrix blocks $E_{\text{IJ}}$ that feature in the bipartite matrix.  
The corresponding \emph{VQE scientific workflow} for molecular-dynamics (MD) target applications (cf. Appendix F for detailed mathematical description) is depicted in Figure~\ref{fig:VQE workflow}, which resembles hybrid execution on Figure~\ref{fig:hybrid-execution}.
\begin{figure*}[!ht]
    \centering
    \includegraphics [width=\textwidth]{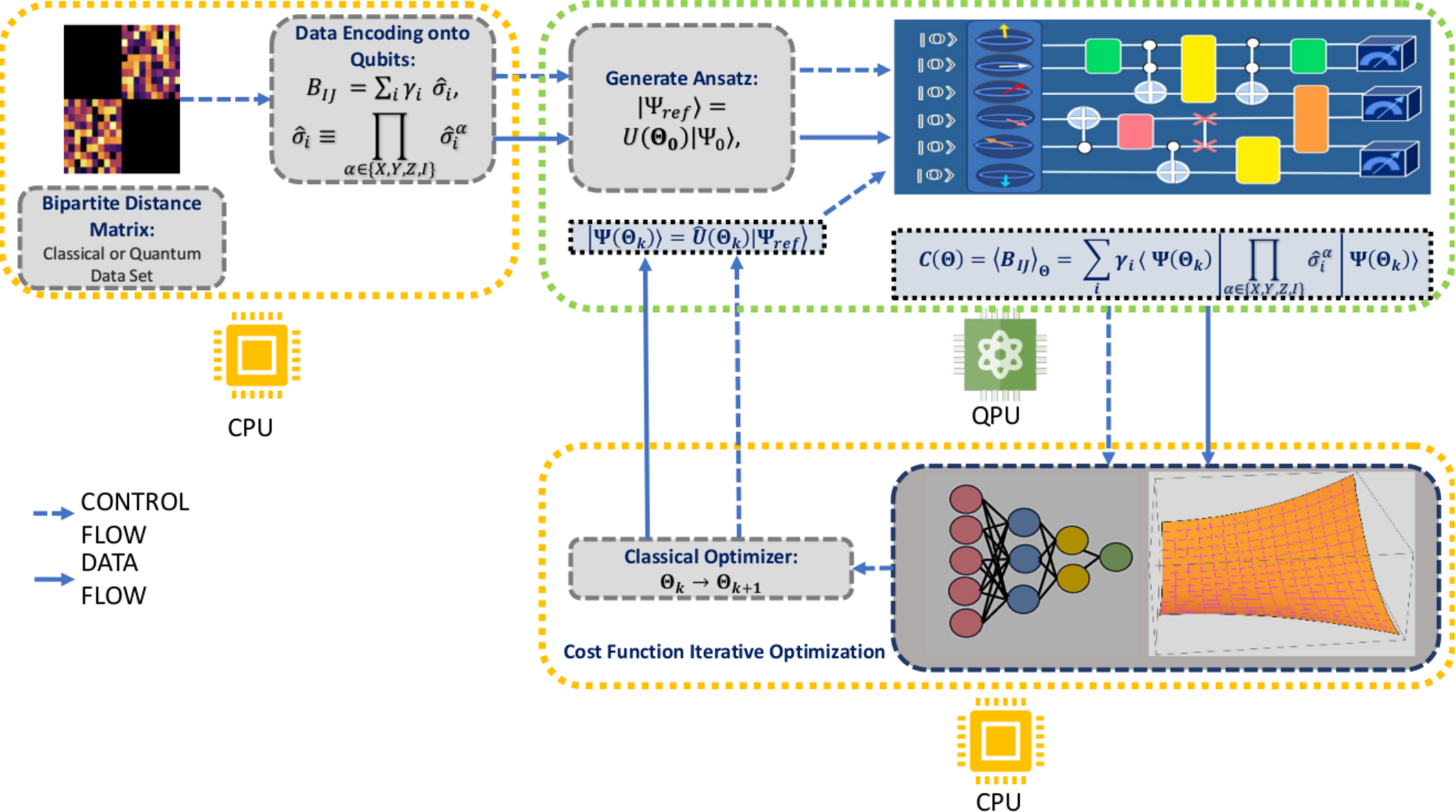}
    \caption{Quantum workflows for variational quantum eigensolvers}
    \label{fig:VQE workflow}
\end{figure*}


\paragraph{Cost Function Optimization} 

For the $\hat{B}_{\text{IJ}}$, we define its classic LEBM $\Lambda_{c}(B_{IJ})$ and its quantum counterpart calculated using VQE using a specific initially chosen hyperparameter setting $\Pi$ as $\Lambda_{vqe}(\hat{B}_{\text{IJ}},\Pi)$. We construct a figure of merit to compare VQE results obtained from different architectures, quantified by a Mean Square Error (MSE). For a set of matrices,  $\hat{B}=\{B_{IJ}^0, B_{IJ}^1, \ldots, B_{IJ}^n\}$, we define 
\begin{equation}
Err(\hat{B}, \Pi) = \sum_{i \in [0, |\hat{B}|]} \frac{(\Lambda_{c}(B_{IJ}^{i}) - \Lambda_{vqe}(\hat{B}_{\text{IJ}}^{i},\Pi))^2}{|\hat{B}|}
\label{eq:objective:error}
\end{equation}
as the MSE between the classic and quantum eigenvalues, calculated using hyperparameters $\Pi$.

\paragraph{Data-Driven Hyperparameters Selection} 
It is important to note that the VQE executions can be largely improved by opting for different hyperparameter $\Pi$ tuning schemes that are available on the IBM Qiskit API. In practice, VQE execution results are affected by tuning hyperparameters and converge to the optimal value after the user pins down a suitable configuration after taking into account possible error-mitigation strategies. Therefore, a grid-search-based algorithm forms an important part of the so-called data-driven methods which will be used to identify the most suitable hyperparameters setting. In~\cite{Cranganore2022}, we describe a method to select a suitable set of hyperparameters for VQE.

\subsection{Generalizing Hybrid WMSs}

The rapid progress of quantum facilities has led to a surge in challenges across various application domains. Consequently, generalizing the hybrid workflow frameworks to accommodate diverse applications is necessary. A comprehensive overview is depicted in Fig. (2), which could encapsulate a broad spectrum of techniques and be embedded at various stages in a generalized hybrid ecosystem. These encompass (a) Hamiltonian simulation and analog/single ancilla Linear Combination of Unitaries (LCU)~\cite{2012} methodology for ground state preparation, (b) Graph-based combinatorial optimization employing quantum-approximate-optimization algorithms (QAOA)~\cite{farhi2014quantum}, (c) Quantum Fourier Transformations (QFT) and Quantum Phase Estimation techniques (QPE) as alternatives to Variational Quantum Eigensolvers (VQEs), (d) Classical machine learning (ML) assisting quantum algorithms~\cite{10.5555/2871393.2871400}, (e) Quantum machine learning (QML) workloads such as quantum support vector machines (QSVMs)~\cite{PhysRevLett.113.130503}, Quantum Kernel-based ML (QKE) methods ~\citep{liu2021rigorous, Havlicek2019-nk}, and (f) Simulations of quantum Hamiltonians for determining electronic state energies~\citep{Clinton2024-mf, drugquantum}. The following sections build over this generalized scheme.

\subsubsection{Hybrid Data/Control Flow}
\begin{figure}[!h]
    \centering
    \includegraphics[width=1.0\columnwidth]{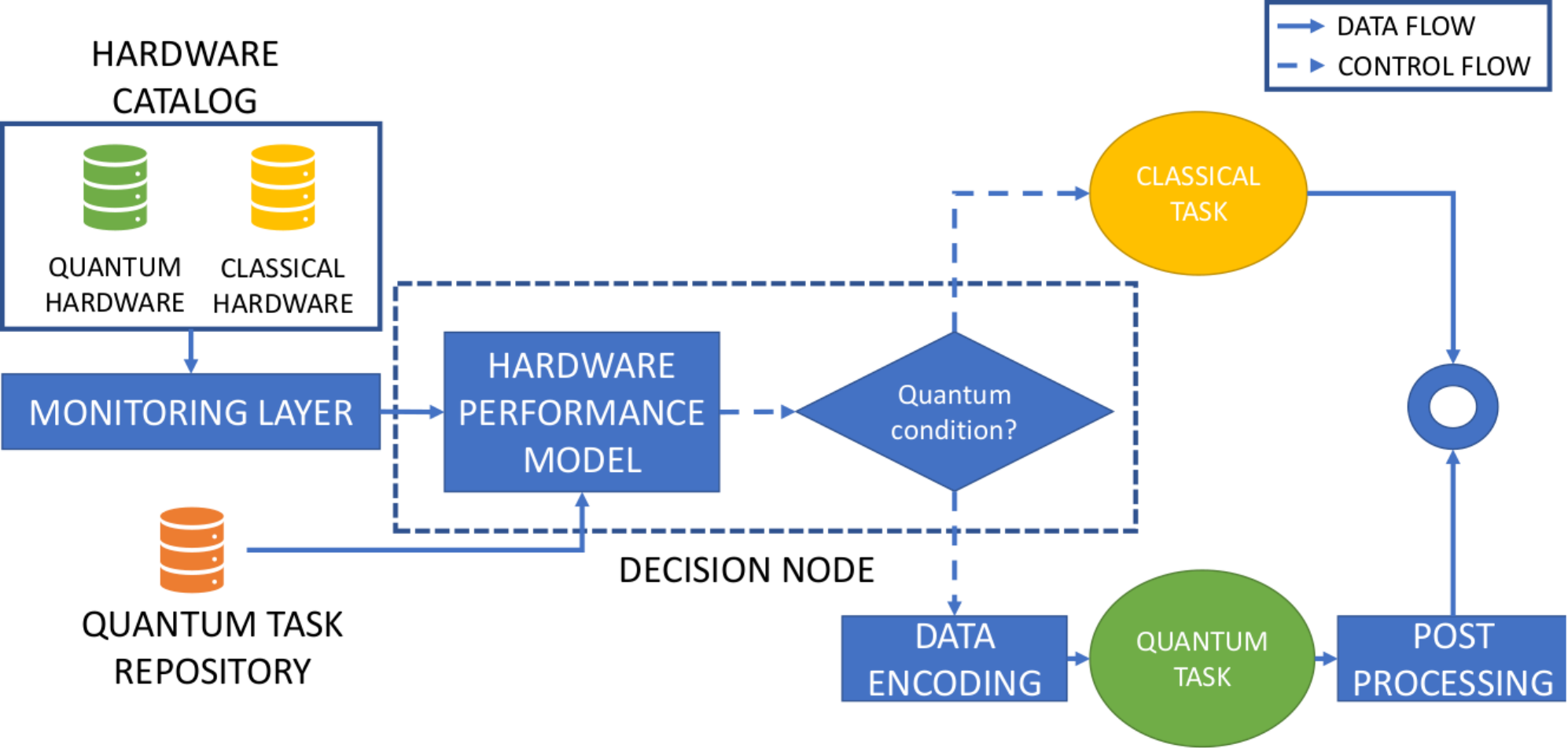}
    \caption{Decision Node.}
    \label{fig:decision-node}
\end{figure}

In a Hybrid Workflow Engine (HWE), characterizing the control and data flows becomes highly relevant. Since the control flows (defining execution paths) and data flows (information movement through computations) are highly interlinked, it is necessary to demarcate and differentiate these two flows systematically. In such intertwined workload sharing environments arising in \emph{variational quantum algorithms} between 
\newline 
quantum-classical components, WMSs become highly non-trivial. Here we relate/map the conducted quantum MD computational experiments to the formally defined hybrid workflows from Section~\ref{sec:hy}.

The data and the control flow are handled by the decision nodes, whose flow is described in Figure~\ref{fig:decision-node}. The goal of decision nodes is to manage the control and data flow of the hybrid workflow, based on pre-defined conditions for execution on quantum hardware. Decision nodes collect data about the available quantum and classical hardware through the monitoring layer. Also, available quantum tasks are fetched by the quantum task repository, following indications provided by annotations. Afterward, performance is evaluated for each task on the available hardware, by applying a specific hardware performance model. Performance models can be either defined as a single metric, i.e., quantum volume~\cite{cross2019validating} or total quantum factor~\cite{sete2016functional}, or exploit machine learning-based approaches such as~\cite{quetschlich2023mqt}.   Based on the results of the hardware performance models, a specific condition is evaluated to make a decision. Based on the condition, the classic or the quantum task is executed. If the quantum task is executed, classic data has to be encoded into the quantum domain by applying different data encoding methods. After execution on quantum hardware, post-processing is applied to mitigate the effects of quantum noise. Finally, the output of the task, either classic or quantum, can be used by the subsequent workflow task. 

Similar approaches to the selection of quantum tasks have been proposed. For example,~\cite{Salm2020} proposes NISQAnalyzer, a method for selecting the most suitable quantum machine among a set of available superconducting quantum hardware and Qiskit implementations, while~\cite{Garcia-Alonso2022} focuses on the execution of quantum services on Amazon BraKet. In~\cite{quetschlich2023mqt}, these ideas are further explored considering also ion-traps architecture and different programming frameworks. However, different hardware, for the same algorithm might require different implementations and different programming models. An example is photonics hardware, where Measurement-Based Quantum Computing~\cite{briegel2009measurement} provides better performance~\cite{zhang2023oneq}. Therefore, we propose an approach where a joint selection of algorithm implementation and target quantum hardware is performed. 

In the next sections, we analyze control and data flow in detail.

\paragraph{\textbf{Control-Flow Modeling for Hybrid Systems}}
Control-flow modeling deals with deconstructing the order of operations in a computer program and dictates the sequence in which computational tasks need to be executed~\cite{9514390}.  In the case of hybrid quantum-classical applications, execution paths can be bifurcated into the classical control flow pipeline (pertaining to classical flows and smaller sub-flows) and quantum control flow pipeline (related to quantum flows and sub-flows) respectively.

\textbf{Classical side of quantum computation}: Typically, the classical control flow dependencies involve:
\begin{enumerate}
     
    \item Initializing set of variational parameters of the quantum circuit.

    \item Spawning multiple quantum circuits and processes simultaneously.
    
    \item Delegating (directing) user-defined quantum tasks $Q$ onto the quantum processors and controlling the execution order of one or several quantum candidates. 
    \item Executing classical programs in concert with quantum routines (e.g., for iterative optimization algorithms). 
    \item Measurement of quantum states and classical parameter feedback. Lastly, generating the final output after multiple iterations (checking convergence to solution) and data post-processing.
\end{enumerate}


\paragraph{Data-Flow Modeling for Hybrid Systems}
 Data-flow analysis on the other hand deals with collecting information about the possible set of values calculated at various stages in a computer program and describes the information passage within a workflow~\cite{9514390}. Data generation, feature extraction, data transformation, data storage, and exchange of input and output data within the workflows are some examples.  
\medskip 
 
\textbf{Classical Data flow modeling:}
\begin{enumerate}
    \item Preprocessing input classical data, implementing feature extraction methods on classical datasets, and developing suitable forms of data encoding schemes for QPU embedding. 

    \item Postprocessing results after quantum measurements, mitigating readout errors and extracting useful information.
\end{enumerate}

\textbf{Quantum Data flow modeling:}
\begin{enumerate}

    \item Quantum data flows concerns with quantum parallelism arising due to multi-qubit states $\ket{\Psi} \in (\mathbb{C}^2)^{\otimes d}$ within a quantum register and the flow of entanglement prepared via controlled gate sequence applications. (for e.g., multi-QPU interactions and managing multi-QPU communication). This may be necessary to communicate a computational state between QPUs or to prepare both registers in a mutually entangled state. These operations require the presence of a quantum network, which uses quantum physical systems to communicate quantum states between registers)~\cite{10.1145/3007651}.

    \item Exchanging data within the several spawned quantum circuits or subparts of the different circuits.

\end{enumerate}

\section{Hybrid Workflow Management Systems}
\label{sec:hybrid-wms}
\subsection{Software Components}
In this section, we describe the main software components that are necessary to enable the execution of tasks into hybrid quantum-classical systems. The main components are also depicted in Figure~\ref{fig:quantum-workflows}.

\textbf{Hardware Catalog} contains information about available hardware in underlying hybrid computing systems, including not only which hardware is available, but also different characteristics (i.e., CPU power, storage capacity, network throughput). This component is common in many other WMSs, such as Pegasus~\cite{pegasus-ref}. Hardware-specific implementations of the hardware catalog are available in different quantum hardware with Cloud frontends, i.e., IBM Quantum, Amazon BraKet, Google Quantum AI, and Azure Quantum. The main difference between these services is that, while IBM and Google have their own quantum devices, Amazon provides an interface to access quantum devices from other vendors, e.g., Rigetti, Quantinuum, and Pasqal. Azure Quantum, instead, proposes a hybrid solution, exposing the same interface not only for Microsoft devices but also for devices of other vendors. The first step towards the integration of quantum hardware in the hardware catalog would first of all require the design of APIs to interconnect these services with WMS hardware catalog. Existing approaches described in~\cite{Salm2020,Garcia-Alonso2022,quetschlich2023mqt} are currently not integrated into WMS, targeting only specific frameworks (e.g., Qiskit and IBM Quantum~\cite{Salm2020} or Amazon BraKet~\cite{Garcia-Alonso2022}) or specific hardware, e.g., superconducting and ion-traps~\cite{quetschlich2023mqt}. Also, information about the characteristics of available hardware, also known as quantum hardware descriptors, should be exposed. Typical descriptors are a number of available qubits, qubits topology, and error rate. Moreover, since quantum computers at the time could not be used concurrently, information about the queueing status should be exposed. In some cases, such as IBM, also other performance metrics that are specific to quantum, i.e., Circuit Layer Operations per Second (CLOPS) or Quantum Volume~\cite{cross2019validating}, can be exposed. However, such metrics might not be directly applicable to other types of hardware, i.e., from other vendors or relying on different technologies, such as, for example, ion-traps, photonics, or neutral atoms. 

\textbf{Quantum Task Repository} includes the implementation of different quantum tasks, in different software variants depending on the available quantum machines. Tasks can be implemented in graphical languages, such as ZX-Calculus~\cite{van2020zx}, different high-level programming frameworks (e.g., Qiskit, PennyLane, Q\#), or in low-level languages such as Open\emph{QASM} (quantum assembler)\footnote{https://github.com/openqasm/openqasm}.  Examples of quantum task repositories are provided by the IBM Quantum Lab, BraKet git repository\footnote{https://github.com/amazon-braket}, Cirq Quantumlib github\footnote{https://github.com/quantumlib/Cirq}, Microsoft Quantum github\footnote{https://github.com/microsoft/Quantum}. However, each one of these repositories targets specific frameworks and devices. A generic quantum task repository should include implementations for different devices available in the hardware catalog. Also, each task should be labeled to describe its goal (i.e., unstructured search, optimization), as well as its input and output. Compilation of selected tasks will then be performed in the subsequent transpilation layer.

\textbf{Classic-Quantum Mapper} is responsible for identifying a mapping of classic tasks into equivalent quantum tasks. This can be performed by means of \emph{code classification}~\cite{ohashi2019convolutional,chen2022overview}, using labels defined in the Quantum Task Repository, or by applying circuit synthesis methods, such as~\cite{Davis2020TowardsOT,atkinson2019quantum} if no corresponding quantum task is found. Currently, the mapping of quantum candidates into quantum tasks is performed manually by the application developer. One could think of exploiting code classification approaches~\cite{ohashi2019convolutional} 

\textbf{Transpilation Layer} is responsible for transpile, i.e., adapting the high-level definition of quantum circuits that define the quantum tasks in the quantum tasks repository, to the target quantum architecture. This step is necessary because the definition of the circuit might not fit the topology of the underlying architecture. To address this issue, a sequence of operations is applied to the circuit definition to reorganize qubits and quantum gates. Transpilers are available in different frameworks, such as Qiskit, PennyLane, and Q\#. However, they are mostly designed for circuit-based quantum computing. Work available for measurement-based quantum computing~\cite{zilk2022compiler}, that allows to fully exploit photonics quantum devices is at the moment still at the experimental phase. In the future, we should be able to have universal transpilers, capable of optimizing for different computational models depending on available quantum hardware.

\textbf{Hybrid Monitoring Layer} is designed to collect data about the execution of both quantum and classic hardware. On classic hardware, there are many methods to collect either low-level metrics about the systems (e.g., CPU, bandwidth) or aggregated metrics (e.g. reliability), ranging from distributed~\cite{zhang2022elastic} and centralized~\cite{sanchez2016design} monitoring approaches. To enable monitoring of hybrid quantum-classical systems, the existing monitoring layer provided by existing WMS should be integrated with API provided by existing quantum framework (i.e., Amazon BraKet, IBM Quantum, Xanadu) to expose to the workflow developer the hardware descriptors collected by quantum computers, described in the hardware catalog. The implementation of the monitoring service for each provider is related to the underlying framework: for example, in IBM Quantum, it is provided by the runtime service, that allows keeping track of the state of a submitted quantum job by querying the state of a Job object. Similar approaches are available on PennyLane and Q\#.

\textbf{Hybrid Intercommunication Layer} constitutes the API that allows communication between quantum and classic hardware. This layer allows data encoding, offloading of data and computation to quantum hardware, retrieving measurements, and performing error correction. Currently, such features are framework-specific and therefore depend on the available API between each framework and the hardware vendors.

\subsection{Hybrid Workflows Execution}
Figure~\ref{fig:quantum-workflows} summarizes how we envision the execution of hybrid workflows. First, the user submits a scientific workflow using the WMS interface. Workflow tasks are then selected by the WMS scheduler, based on its scheduling policy. Based on the workflows annotations, the scheduler knows if the current $t$ task is a \textbf{classic} or a \textbf{quantum} task. If $t$ is classic, the scheduler will select target machine based on (1) task requirements, (2) availability of hardware resources (based on input hardware catalog), and (3) task-related cost function, which takes in input task $t$ and machine $m$ and outputs a score, which defines whether it is convenient to execute $t$ on machine $m$ and it is used by the scheduler for its decisions. Execution is then performed on the identified machine $m$, and the results of task $t$ are forwarded to the user or to the following tasks that require them as input. 

If $t$ is a quantum task, the first thing to do is to verify whether a quantum \textbf{target}, i.e., an equivalent quantum algorithm for the classic quantum candidate, is available in WMS quantum codebase. If no equivalent quantum target is found, the workflows continue its execution by scheduling the classic task on the available classic hardware; otherwise, according to the logic implemented in the decision node, the WMS can decide either to (1) select a quantum algorithm among the available quantum targets, or (2) execute the classic implementation of the task. In the first case, execution depends on the quantum task type (as defined in Figure~\ref{fig:task-execution}), in the latter, execution proceeds as in classical workflows.

\begin{figure*}[!ht]
    \centering
    \includegraphics[width=0.9\textwidth]{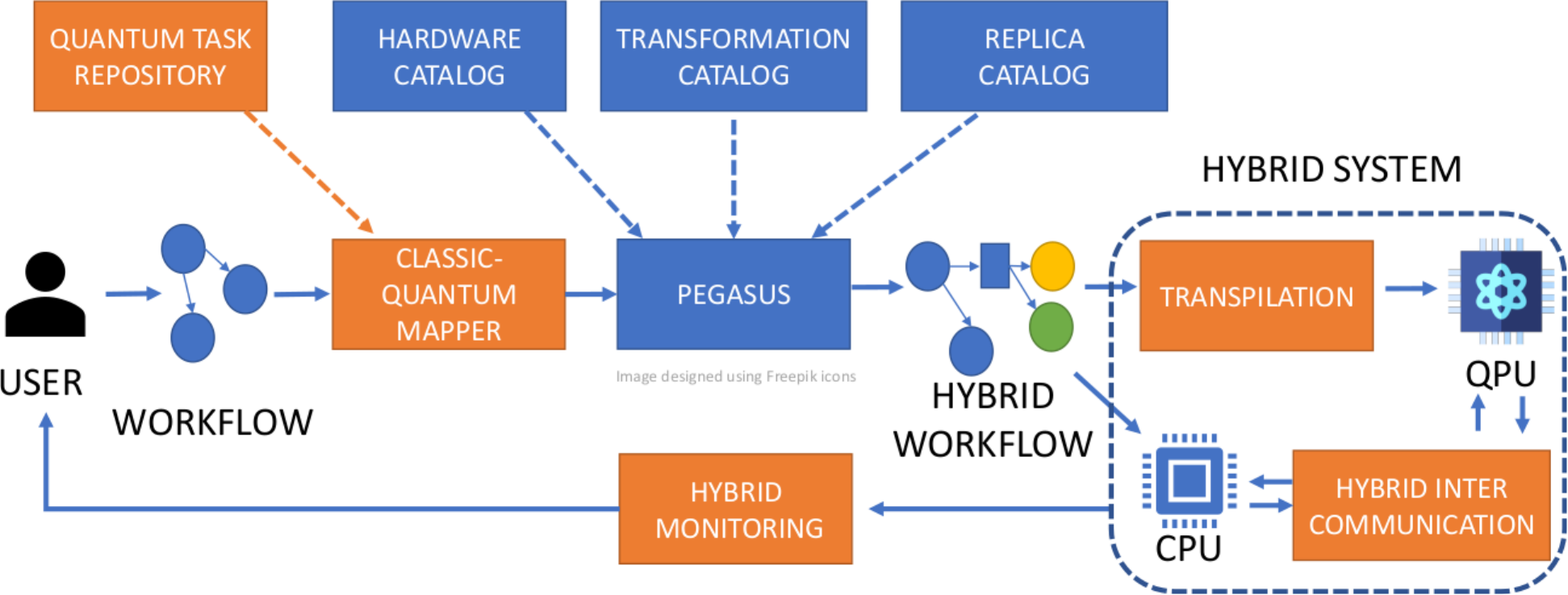}
    \caption{Quantum Workflow Management System.}
    \label{fig:quantum-workflows}
\end{figure*}
\subsection{Assumptions and Limitations}
Description of quantum task repository, as well as transpilation as mapper, are focused on the quantum circuit model, that is typical of superconducting machines used by the most common quantum hardware platforms (e.g., IBM Quantum, Amazon BraKet) and also by ion-trapped machines (e.g., AQT). However, different technologies are available for quantum hardware, such as photonics quantum machines (e.g., Xanadu). While each circuit can be also executed on photonics hardware, these machines proved to have the best performance with measurement-based model~\cite{briegel2009measurement}, therefore circuit synthesis and mapping might not guarantee the best performance.

Another issue is related to the code classification-based approach since we assume that there is a label for each classic task that allows us to map it to its quantum counterpart. As a consequence, this approach is inherently dependent on the source code corpus that we use to train the model. 

Concerning hardware descriptors, we mostly focus on definitions provided by IBM Qiskit, such as quantum volume and CLOPS, whose definition is based on superconducting IBM machines. While these metrics could be extracted also for other types of hardware employing the circuit-based model, they might not be applicable to annealers or photonics.

\section{Challenges}
\label{sec:challenges}
Based on our vision of execution of hybrid workflows, we identify the challenges that must be addressed to realize our vision of hybrid classic/quantum workflows.

\subsection{Quantum Hardware Descriptors}
As already mentioned in Section~\ref{sec:hybrid-wms}, hardware descriptors should be part of the hardware catalog, to characterize different available quantum hardware and to perform decisions on whether to allocate. The first challenge is to identify descriptors that can be used to describe quantum hardware: while on classic CPUs we have well-known descriptors (e.g., CPU frequency, RAM frequency, and memory, available storage), quantum hardware cannot be characterized using the same descriptors, due to the inherently different computational model. Also, there are different hardware technologies available for quantum computers (e.g., superconducting, neutral-atoms, ion-traps, photonics, and annealers), which makes it difficult to identify the descriptors most impacting the performance of quantum computing. 

To address this challenge, first of all, in-depth empirical studies are required to identify the most impacting parameters. Empirical studies require also the definition of purposely designed benchmarks, similar to~\cite{Arute2019,Willsch2022,jałowiecki2023pyqbench}. Based on data collected from the benchmarking, we can identify a subset of descriptors that is of interest for a specific quantum hardware. Descriptors can be also composed of aggregated hardware parameters, such as quantum volume~\cite{cross2019validating}, that is used by IBM Qiskit. 

Moreover, applications need to be able to target different QPUs automatically depending on the problem type, involving minimal code modifications. Hence, a \emph{QPU agnostic} development allows greater freedom and efficiency in hybdrid ecosystems. Recently, a framework named \emph{MQT Predictor}~\cite{quetschlich2023mqt} that automates quantum device selection and according to compilation of quantum algorithms for suitable hardware, execution has been proposed. This toolkit\footnote{https://github.com/cda-tum/mqt-predictor} offers a) prediction method, based on Supervised Machine Learning, which without performing any compilation predicts the most suitable hardware specific to the application in consideration. b) A reinforcement learning (RL) based method producing device-specific quantum circuit compilers. The compilation passes from several compiler tools are combined by learning (trained) optimized sequences of those passes (following a mix-and-match compiler pass) with respect to a customizable fidelity.

\subsection{Performance Models}
Performance models should be able to predict different performance metrics (i.e., running time, error rate) taking as input (1) a quantum task, and (2) a target quantum hardware where the input task should be executed. Such performance models are also required by WMS schedulers to decide where to allocate input tasks. Performance models should be based on values of identified hardware descriptors to facilitate the integration of different architectures. The main challenges in designing performance models lie in the difference between quantum architectures, which makes it difficult to design a generalized model that can be applied to different hardware. Also, since quantum error variates with time, depending on different environmental factors~\cite{Preskill2018quantumcomputingin}, performance models should be automatically updated when a change in hardware descriptors is detected. 

Existing performance models have been designed for different quantum hardware~\cite{Lubinski2023}, targeting specific applications,  without considering applications' performance with relation to values of typical hardware descriptors. 

\subsection{Optimization of Hybrid Workflows}
Optimization of hybrid workflows embraces different phases of workflow execution, depending on the type of quantum tasks: for circuits, optimization focuses on adaptation of circuits to the underlying qubit topology, or to perform gate-level optimization on the circuit (i.e., removing redundant parts). Also, on the classic side, data encoding has to be optimized for the underlying classic architecture. Concerning variational quantum algorithms, as shown by~\cite{cerezo2021variational,Cranganore2022}, optimization requires setting different hyperparameters, both on classic (i.e., optimizer parameters, cost function) and on the quantum side (i.e., circuit structure). Optimizations should be applied at the time of workflow execution.

Since data encoding methods require different algebraic operations on input data~\cite{10.5555/3309066}, optimization could include typical operations used in HPC to improve mathematical computations, or even the use of specific hardware (i.e., GPUs, FPGAs) that are known to perform well for these operations. For optimization of circuits according to the underlying hardware, deep-learning-based approaches~\cite{fösel2021quantum} or structural optimization approaches~\cite{bae2020quantum,Ostaszewski2021structure} can be considered. Finally, for variational quantum algorithms, hyperparameter optimization approaches can be applied, as discussed by~\cite{Cranganore2022}.

\subsection{Error Mitigation}
Current NISQ architectures are subject to noise, due to environmental factors and technological limitations~\cite{Preskill2018quantumcomputingin}. Considering the effect of noise on the output of quantum algorithms, the typical approach in hybrid systems is to perform error mitigation on classic hardware. Different approach are available, either ML-based~\cite{Domingo2023,Locher2023quantumerror} or based on error correction codes~\cite{sivak2023real}. In both cases, designing a model for error correction requires deep knowledge of the target hardware, and collecting different error metrics. Once error metrics have been identified, since error is constantly varying over time, data about such metrics have to be constantly collected in order to timely update the error model. Finally, the model should be updated timely, as soon as error goes above a given threshold.

To address these challenges, first of all, different methods to constantly monitor quantum hardware-induced errors and readout error mitigation strategies~\cite{electronics11192983} must be implemented. Moreover readout Error mitigation strategies for quantum workflows ha Then, techniques based on reinforcement learning, such as described in~\cite{schulman2017proximal,TFAgents}. These methods can be also improved by applying FPGAs~\cite{sivak2023real} or Edge AI methods, such as~\cite{demaio2022roadmap}. Updates of the model could be triggered by staleness control methods, similar to~\cite{aral2020}. 

\subsection{Integration within WMSs}
The final challenge is to integrate quantum machines into the HPC continuum. At the moment, different frameworks are available to manage quantum hardware, (i.e., Qiskit, BraKet, Pennylane). In order to fully exploit available quantum machines, a WMS should be capable of communicating with different frameworks, integrating all functionalities that are required for the execution of hybrid workflows (i.e., data encoding, transpilation, error mitigation). Each of these functionalities should be integrated with the HPC infrastructure without affecting performance of workflow execution. 

\section{Conclusion and Outlook}
\label{sec:outlook}
In this work, we describe the main components that would be needed for the execution of scientific workflows in hybrid ecosystems. Our investigation starts from the study of an MD use case, where we describe a pipeline that starts with the identification of the quantum candidates and the equivalent quantum tasks. Based on this analysis, we describe a possible architecture for a hybrid WMS, identifying software components that would facilitate the integration of classic and quantum architectures. Finally, we identify the challenges that need to be addressed for the execution of hybrid workflows. 

In the future, we plan to further investigate the integration of quantum architectures in HPC, considering different types of architectures (i.e., photonic, annealers) and different programming models (i.e., measurement-based instead of gate-based). Also, we plan to extend this study on different scientific use cases, including the development of performant software stacks.

\section*{Acknowledgements}
This work is partially funded through the projects \textit{"Runtime Control in Multi Clouds" (Rucon)}, Austrian Science Fund (FWF): Y904-N31 START-Programm 2015; Standalone Project \textit{"Transprecise Edge Computing" (Triton)}, Austrian Science Fund (FWF): P 36870-N; \textit{Trustworthy and Sustainable Code Offloading (Themis)}, Austrian Science Fund (FWF): PAT1668223; and by the Flagship Project \textit{"High-Performance Integrated Quantum Computing" (HPQC)} \# 897481 Austrian Research Promotion Agency (FFG) funded by the European Union – NextGenerationEU.
Ewa Deelman’s work was funded by the U.S. National Science Foundation under award \# 2331153.

\bibliographystyle{elsarticle-num-names} 
\bibliography{quantum-wf}

\newpage 

\appendix
\renewcommand{\theequation}{A.\arabic{equation}}
\setcounter{equation}{0}
\section{Mathematical prerequisites for quantum computing}
\begin{definition}[Hermitian Matrices/Operators]: Let $A \in M_{n ,n} (\mathbb{C})$, i.e., $A$ is a square matrix with complex entries $a_{ij}$. The matrix $A$ is said to be Hermitian (self-adjoint) if $A$ is invertible and the matrix elements satisfy the condition, 
\begin{align*}
    a_{ij} = \overline{a_{ji}}     
\end{align*}
Where $\overline{a}$ denotes its complex conjugate. Hence, a Hermitian matrix is equivalent to its transpose complex conjugate (also represented in quantum computing as a $\dagger$ subscript). A succinct matrix representation used in quantum computing is, $A = A^{\dagger} := \overline{A}^{T}$.
\end{definition}

\begin{definition}[Unitary Matrices/Operators]: Let $U \in M_{n ,n} (\mathbb{C})$. A Unitary matrix $U$ satisfies the following condition, 
\begin{align*}
    U^{-1} = U^{\dagger} := \overline{U}^{T}
\end{align*}
Hence, the inverse of a Unitary matrix is its transposed complex conjugate (also, known as \textit{Hermitian conjugate})
\end{definition}

\textbf{Tensor product:} Let A be an $m \times n$ matrix and B be a $p \times q$ matrix, 
\begin{align*}
    A = \begin{bmatrix} 
    a_{11} & a_{12} & \dots a_{1n} \\
    \vdots & \ddots &  \vdots\\
    a_{m1} &  \ldots & a_{mn} 
    \end{bmatrix} ,  
    B = \begin{bmatrix} 
    b_{11} & b_{12} & \dots b_{1q} \\
    \vdots & \ddots &  \vdots\\
    b_{p1} &  \ldots & b_{pq}
    \end{bmatrix}.
\end{align*}
The tensor product C of matrix A and B is an $mp \times nq$ dimensional matrix of the form, 
\begin{align*}
    C = \begin{bmatrix} 
    a_{11} B & a_{12} B & \dots a_{1n} B \\
    \vdots & \ddots &  \vdots\\
    a_{m1} B &  \ldots & a_{mn} B
    \end{bmatrix} \equiv  \begin{bmatrix} 
    a_{11} b_{11} & a_{11} b_{12} \ldots a_{12} b_{11}  \dots a_{1n} b_{1q} \\
    \vdots & \ddots  \vdots\\
    a_{m1} b_{11} &  a_{m1} b_{12} \dots a_{m2} b_{11}  \ldots  a_{mn} b_{pq}
    \end{bmatrix}.
\end{align*}
\medskip 

\textbf{Qubit tensor product state}:
The total degrees of freedom of a qubit composite is given by the \texttt{Tensor product} between qubit states. 
\medskip 
Consider a set of degrees of freedom associated with an $n$ dimensional Hilbert space, 
\begin{align*}
    \mathbb{H}_1 = span\big\{\ket{0}, \ket{1}, ...., \ket{n-1} \big\}, 
\end{align*}
and another set of degrees of freedom associated with an $m$ dimensional Hilbert space, 
\begin{align*}
    \mathbb{H}_2 = span\big\{\ket{0}, \ket{1}, ...., \ket{m-1} \big\}.
\end{align*}
Thus, all possible superposition states of these two qubit-conjoined Hilbert space is given 
by the tensor product, 
\begin{align}
    \mathbb{H} = \mathbb{H}_1 \otimes \mathbb{H}_2      
\end{align}

This tensor-product space is spanned by orthonormal basis vectors, 
\begin{align*}
    \{\ket{j} \otimes \ket{k} := \ket{j, k} : j = 0, 1,...., n-1; k = 0, 1,....,m-1\}
\end{align*}

Thus, an arbitrary state vector $\ket{\Phi}$ in this composite Hilbert space $\mathcal{H}_1 \otimes \mathcal{H}_2$ can be expanded in its computational basis (superposition state) reads, 

\begin{align}
    \ket{\Phi} = \sum_{j=0}^{n-1} \sum_{k=0}^{m-1} \gamma_{j,k} \ket{j} \otimes \ket{k},
\end{align}
where, the coefficients (amplitudes) $\gamma_{j,k} \in \mathbb{C}$.
Using the tensor (Kronecker) product machinery, one can now formally define a quantum register. The Hilbert space of an $n$-qubit initialized quantum register $\mathbb{H}^{\otimes n}$ is the \texttt{n-fold} tensor product state of a single qubit Hilbert space $\mathbb{H} = \mathbb{C}^2$, i.e.,  
\begin{align}
(\mathbb{C}^{2})^{\otimes n} := \underbrace{\mathbb{C}^2 \otimes ..... \otimes \mathbb{C}^2}_{\text{$n$-copies}}
\end{align}

with the basis span of Eq.(\ref{Eq.tensor}) being, 
\begin{equation*}
    \mathcal{B} := \big\{\ket{i_0} \otimes .... \otimes \ket{i_n-1}:=\ket{i_0, ...,i_{N-1}} : i_0,...,i_{n-1} \in \{0,1\} \big\}
\end{equation*}
The tensor product basis can be regarded as column vectors for e.g.,  $\ket{0} \otimes .... \otimes \ket{0} := \ket{0,0,...,0} = \begin{bmatrix} 1 ,& 0, & .....,& 0 \end{bmatrix}^{T}$, ..., $\ket{1} \otimes .... \otimes \ket{1} := \ket{1,1,...,1} = \begin{bmatrix} 0 ,& 0, & .....,& 1\end{bmatrix}^{T}$ 
\medskip

\begin{theorem}
\textbf{ Schmidt decomposition: } Let \{$\mathbb{H}_{1} \mathbb{H}_{2}, ..., \mathbb{H}_{n}$\} be Hilbert spaces of dimensions $p_1, p_2, ..., p_n$ respectively. Assume that  $p_n \geq p_{n-1} \geq.... \geq p_1$. For any state in this composite (multi-partite) system, i.e, $\ket{\xi}\in \mathbb{H}_{1} \otimes \mathbb{H}_{2} .... \otimes \mathbb{H}_{n}$, there exist orthonormal states $\{\ket{\phi_1},\ldots ,\ket{\phi_{p_1}}\} \subset \mathbb{H}_{1}$,  $\ldots, $ $\{\ket{\psi_1},\ldots ,\ket{\psi_{p_n}}\} \subset \mathbb{H}_{n}$ such that for real non-negative scalar coefficients $\lambda_i \in \mathbb{R}$, 
\begin{equation*}
\ket{\xi} = \sum_{i=1}^{r} \lambda_{i}\ket{\phi}_{i} \otimes ..... \otimes \ket{\psi}_{i}.
\end{equation*}

Where, \(\lambda_i\) are the Schmidt coefficients, and \(|\phi_l\rangle\), ...., \(|\psi_l\rangle\) are the corresponding entangled states of the composite (multi-partite) system. The number of non-zero Schmidt coefficients $r$ determines the degree of entanglement between the multiple subsystems.
\end{theorem}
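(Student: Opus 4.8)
The plan is to reduce everything to the singular value decomposition (SVD) of the coefficient array of $\ket{\xi}$: prove the bipartite case $n=2$ directly, then attempt the general case by induction on the number $n$ of tensor factors.

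First I would handle $n=2$. Fixing the computational bases of $\mathbb{H}_1$ and $\mathbb{H}_2$, write $\ket{\xi}=\sum_{j}\sum_{k}c_{jk}\,\ket{j}\otimes\ket{k}$ and collect the amplitudes into the $p_1\times p_2$ complex matrix $C=(c_{jk})$. Apply the SVD $C=U\Sigma V^{\dagger}$ with $U,V$ unitary and $\Sigma$ carrying the non-negative singular values $\lambda_1\ge\cdots\ge\lambda_r>0$. Defining $\ket{\phi_i}:=\sum_j U_{ji}\ket{j}$ and $\ket{\psi_i}:=\sum_k\overline{V_{ki}}\,\ket{k}$ and substituting, the double sum collapses to $\ket{\xi}=\sum_{i=1}^{r}\lambda_i\,\ket{\phi_i}\otimes\ket{\psi_i}$; the orthonormality of $\{\ket{\phi_i}\}$ and $\{\ket{\psi_i}\}$ is exactly the orthonormality of the columns of the unitaries $U$ and $V$. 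I would also record the invariant characterisation that the $\lambda_i^2$ are the non-zero eigenvalues of $CC^{\dagger}$, equivalently of the reduced operator $\mathrm{Tr}_2\ket{\xi}\bra{\xi}$, which shows $r$ and the multiset $\{\lambda_i\}$ do not depend on the chosen bases.

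For $n\ge 3$ the natural move is to group the last $n-1$ factors into $\mathbb{K}:=\mathbb{H}_2\otimes\cdots\otimes\mathbb{H}_n$, apply the bipartite result to $\ket{\xi}\in\mathbb{H}_1\otimes\mathbb{K}$ to get $\ket{\xi}=\sum_i\lambda_i\,\ket{\phi_i}\otimes\ket{\chi_i}$ with $\{\ket{\phi_i}\}\subset\mathbb{H}_1$ and $\{\ket{\chi_i}\}\subset\mathbb{K}$ orthonormal, and then recurse on each $\ket{\chi_i}\in\mathbb{K}$.

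The hard part — and essentially the only step that is not bookkeeping — is the last one: recursing on the individual $\ket{\chi_i}$ yields, for each $i$, its own tuple of orthonormal bases of $\mathbb{H}_2,\ldots,\mathbb{H}_n$, and there is no reason these tuples agree across different $i$, so in general one cannot collapse the resulting double sum into the single-index product form in the statement. I would therefore carry out the merging under the extra structural assumption that the relevant reduced operators on the various factors share a simultaneous eigenbasis — which is what the cases actually used here reduce to, namely the $16\times16$ bipartite blocks $E_{\text{IJ}}$ and the two-register states arising in the SWAP-test and VQE pipelines, all of which are effectively bipartite — and flag the fully general multipartite claim as needing that hypothesis. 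Everything else is routine: tracking the complex conjugate in the definition of $\ket{\psi_i}$, and checking that discarding the zero singular values leaves the surviving orthonormal families intact.
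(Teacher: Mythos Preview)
The paper does not supply a proof of this theorem; it is stated without argument in Appendix~A, so there is no in-paper proof to compare against.

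Your bipartite ($n=2$) argument via the SVD of the coefficient matrix is the standard, correct proof, and the observation that the $\lambda_i^2$ are the non-zero eigenvalues of the reduced operator $\mathrm{Tr}_2\ket{\xi}\bra{\xi}$ is the right basis-independent characterisation of $r$ and the Schmidt coefficients.

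Your caution about the induction step for $n\ge 3$ is not merely prudent --- it is necessary, because the theorem as stated is \emph{false} for general multipartite states. The paper itself exhibits a counterexample in Appendix~D: the three-qubit $W$ state $\ket{W}=\tfrac{1}{\sqrt{3}}(\ket{001}+\ket{010}+\ket{100})$. Tracing out the first qubit gives $\rho_{23}=\tfrac{2}{3}\ket{\Psi^+}\bra{\Psi^+}+\tfrac{1}{3}\ket{00}\bra{00}$ with $\ket{\Psi^+}=\tfrac{1}{\sqrt{2}}(\ket{01}+\ket{10})$; if a single-index decomposition $\sum_i\lambda_i\ket{\phi_i}\otimes\ket{\chi_i}\otimes\ket{\psi_i}$ existed, one would have $\rho_{23}=\sum_i\lambda_i^2\ket{\chi_i\psi_i}\bra{\chi_i\psi_i}$, forcing its non-degenerate eigenvectors to be product states --- but $\ket{\Psi^+}$ is entangled. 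So the obstruction you identified in the recursion (the local bases produced for different $\ket{\chi_i}$ need not agree) is genuine and cannot be repaired in general. Your decision to restrict to the bipartite instances actually used in the paper --- the SWAP-test registers and the $B_{\text{IJ}}$ blocks fed to VQE --- and to flag the multipartite claim as requiring an extra hypothesis is exactly the right response; the gap lies in the statement, not in your argument.
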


\section{Quantum Programming Model}
\renewcommand{\theequation}{B.\arabic{equation}}
\setcounter{equation}{0}
\subsection{Quantum information processing and quantum computation}
Development of quantum software stacks, quantum compilers, and development in the domains of \emph{quantum software engineering} requires at least a basic understanding of quantum physics and quantum information processing. This section is an introduction to the \emph{ programming model} for quantum computation, which is especially intended for computer scientists, computer software engineers, and computational scientists who intend to migrate parts of their classical software onto quantum devices. 

\subsection{Qubit}
The basic unit of information, reflecting an on and off state of classical computers are the classical bits which can take the values $0$ and $1$ only. 
The basic building blocks of quantum computing are known as the quantum bits or in short \texttt{qubits}. 

In the case of a single qubit the associated Hilbert space is $\mathbb{H} = \mathbb{C}^2$. Thus, a qubit is a linear combination of orthonormal (vectors) basis states (superposition), denoted in the \texttt{Dirac} notation as $\ket{0}$ and $\ket{1}$. Thus the state of a qubit is a vector in a two-dimensional complex vector space~\cite{nielsen_chuang_2010}. Thus, $\ket{\Psi}$ can be expanded in the orthonormal basis states as, 

\begin{equation}
    \ket{\psi} = \alpha \ket{0} + \beta \ket{1},
\end{equation}
where, 
\begin{equation}
  \ket{0} = \begin{bmatrix}
     1 \\
    0 \\

  \end{bmatrix}
  \quad 
  \textrm{and}
  \quad
  \ket{1} = \begin{bmatrix}
    0 \\
    1 \\
  \end{bmatrix} 
\end{equation}
$\alpha, \beta \in \mathbb{C}$. These complex coefficients are also called probability amplitudes. Since probabilities must add to $1$ (normalization condition), the inner-product of the state $\ket{\psi}$ with itself (norm squared of the vector) must be equal to $1$. Hence, it is equivalent to the condition that 
\begin{align}
    \bra{\psi}\ket{\psi} := ||\psi||^2 = 1,
\end{align}
where the state $\bra{\psi} \in \mathbb{H}^{*}$ belongs to the dual vector space. This is equivalent to saying that the complex coefficients satisfy a unit norm condition, i.e., 
\begin{equation}
    |\alpha|^2 + |\beta|^2 = 1.
\end{equation}
   
An actual measurement process determines the value of the qubit, which is obtained via the amplitude probability squared. The measurement procedure yields a \textit{classical} result either corresponding to a \emph{0} or \emph{1} bit value. Thus, the measurement of  $\ket{0}$ is obtained by squaring the probability amplitude of its complex coefficient, i.e. $|\alpha|^2$, whereas $\ket{1}$ is measured by computing $|\beta|^2$. From Eq.(2) it is clear that the state vector $\ket{\psi}$ is constrained on a unit sphere $\vectorbold{S}^2 = \{\vectorbold{x} \in \mathbb{R}^3 : ||\vectorbold{x}|| = 1 \}$. Thus the geometric representation of a qubit is the so-called \texttt{Bloch-sphere} representation (The Bloch sphere was generated using the \texttt{QuTiP}~\cite{Johansson2012-ty} package). 
\begin{figure}[H]
    \centering
    \includegraphics[width=0.40\textwidth]{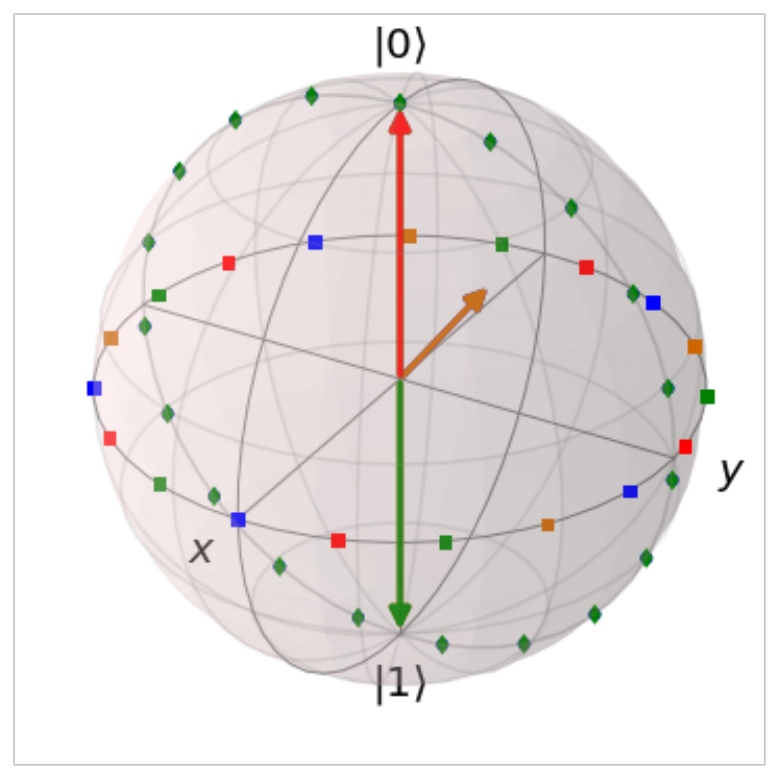}
    \caption{Bloch sphere representation of the qubit. The red-colored vector (arrow) is the state $\ket{\vectorbold{0}}$, while the green-colored vector is state $\ket{\vectorbold{1}}$. The orange vector indicates an intermediate state and the colored points correspond to the qubit rotations (infinitely many possibilities to rotate the vector) on the Bloch sphere.}
    \label{fig:bloch}
\end{figure}
\medskip

In the Bloch sphere representation, see Fig.\ref{fig:bloch}, the state vector $\ket{\psi}$ can be expressed in terms of the spherical polar coordinate basis reads,
\begin{equation}
    \ket{\psi} = cos\frac{\vartheta}{2} \ket{0} + e^{i \varphi} sin\frac{\vartheta}{2} \ket{1},
\end{equation}
where, $0 \leq \vartheta \leq \pi$ and $0 \leq \varphi \leq 2 \pi$. Thus, the state (Bloch) vector $\ket{\psi}$ can assume any of the infinitely many orientations on the Bloch-sphere. This signifies the tremendous power of information processing using quantum, since, the qubit on the Bloch sphere can exist as an infinite coherent superposition of all the states on the unit-sphere simultaneously!  This is in stark contrast to classical information processing wherein, the classical bits can only assume either of the two Boolean values $\texttt{0}$ or $\texttt{1}$ at a particular time instant~\cite{nielsen_chuang_2010}.   
\medskip 

\subsection{Quantum registers}\label{quantum-registers}

In classical computers, multiple bits are combined to form  a (classical)
register. In the same way, a sequence of $n$ initialized qubits cascaded together forms the storage-device, and is called the \textit{quantum register} or a \textit{qubit register}. Thus, an arbitrary state vector $\ket{\Psi}$ of the composite n-qubit \emph{quantum register} is a tensor product state living in a very huge Hilbert space $(\mathbb{C}^{2})^{\otimes n}$. It allows an expansion in its computational basis (orthonormal) states as, 
\begin{equation}\label{Eq.tensor}
    \ket{\Psi} = \alpha_{0,0..,0} \ket{0} \otimes ... \otimes \ket{0} + \alpha_{0,0..,1}\ket{0} \otimes ... \otimes \ket{1} + \alpha_{1,1..,1} \ket{1} \otimes ... \otimes \ket{1}.
\end{equation}

Where, the symbol $\otimes$ corresponds to the mathematical operation of a \emph{Tensor} product (see Appendix A). From here on,  we will use a more compact notation for $\ket{i_0} \otimes \ket{i_1} \otimes ... \otimes \ket{i_{n-1}} = \ket{i_0, i_1, ..., i_{n-1}}.$ 
\medskip 

Since a single qubit Hilbert (state) space is $\mathbb{H} = \mathbb{C}^2$, correspondingly, an $n$-qubit quantum register corresponds to an n-fold tensor product of the single qubit state space, i.e., 
\begin{align}
(\mathbb{C}^{2})^{\otimes n} := \underbrace{\mathbb{C}^2 \otimes ..... \otimes \mathbb{C}^2}_{\text{$n$-copies}}
\end{align}

Introducing a more succinct notation, the n-qubit quantum register, Eq.(\ref{Eq.tensor}) can be written as,
\begin{equation}
    \ket{\Psi} = \sum_{i=0}^{2^n - 1}\alpha_i \ket{i} 
\end{equation}

where, $i_0, i_1, ..., i_{n-1}$ in $\ket{i_0,...., i_{n-1}}$ represent the binary notation of $i$. The complex coefficients, $\alpha_{i_0,....,i_{n-1}} \equiv \alpha_{i} \in \mathbb{C}$. 
The normalization condition, 
\begin{equation*}
    \sum_{i=0}^{2^n-1} |\alpha_{i}|^2 = 1,
\end{equation*}

\subsection{Quantum Entanglement and Quantum Parallelism}

This subsection delves into two pivotal phenomena of quantum physics, namely \emph{quantum entanglement}, or just \emph{entanglement} and \emph{quantum parallelism}, which play fundamental roles in the fields of quantum computation quantum information processing, and quantum communication protocols~\cite{PhysRevLett.110.030501}.

\begin{itemize}
    \item Quantum entanglement: In quantum computers, a pair or multiple qubits can be correlated with each other over long distances owing to the property of \emph{quantum entanglement}. In an entangled quantum system, the state of one qubit cannot be described independently of the state of the other irrespective of their spatial separation. Thus, these states are not \emph{separable} (no tensor product representation) and cannot be written as a \emph{tensor product} state. The degree of entanglement between the composite systems can be quantified using the \emph{Schmidt decomposition} (see Appendix A for definition). Manipulating the state of one of the qubits instantaneously changes the state of the other one in a predictable way.  Thus, quantum entanglement enables the users to create complex quantum circuits (cf. Appendix C), wherein an operation on a single qubit instantaneously affects the state of another qubit that it is correlated with. Entanglement enables quantum computers to perform parallel computations by adding additional qubits resulting in an exponential increase in its number-crunching capabilities. 

\item Quantum parallelism: Entanglement also influences the superposition of multiple qubits by allowing them to be in a joint superposition state, leading to another intrinsic feature of quantum processors, namely, \emph{quantum parallelism}~\citep{Bub2007,Djordjevic2022}. 
It is evident from Eq.(\ref{Eq.tensor}) that a quantum register with $n$-qubits is in a quantum superposition of $2^n$ states, ie., \emph{all classical alternatives at once}. Hence, one can simultaneously manipulate all the $2^n$ possibilities that exist in this \emph{huge} extended vector space.

\end{itemize}

\subsection{Quantum Logic Gates, Quantum Circuit Model and Quantum Algorithms}

Quantum (logic) gate architectures are the quantum analogs of classical logic gates. A quantum gate $\hat{U}$ is represented as a unitary operator (matrix) is a generalization of rotation on complex vector spaces. An \texttt{n}-qubit quantum register  The unitary operator transformations are inner-product (or norm) preserving. By definition, these unitary operator inverse is  its own transposed complex conjugate, also called \textit{Hermitian conjugate} (see Appendix A), i.e., 
\begin{equation*}
    \hat{U} \hat{U}^{\dagger} = \hat{U}^{\dagger} \hat{U} = \mathds{1}.
\end{equation*}
The $U^{\dagger}$ quantum gate reverses the computation by undoing the gate operator. Thus, in quantum devices, there exists another unique feature of reversible computation using the inverse Unitary operators that is not possible in classical architectures.

\subsubsection{Quantum Circuit Model:}  The initialized quantum register together with the qubit operation producing single or multi-qubit quantum gate sequences form the basic building blocks of a \texttt{quantum circuit}. There are multiple quantum gates that one uses to perform computations on the state vector. Some of the most frequently used single qubit gates in quantum information processing and computation are the so-called, \texttt{PAULI} gates (operators) \{$I$, $X$, $Y$, $Z$\}, the \texttt{HADAMARD} gate $H$ and the $P$-gate (see Appendix B).

A \emph{quantum algorithm} is a collection of unitary quantum gates that are assembled to successively perform one or many \emph{unitary transformations} (computations) on a quantum register, in order to achieve a specific computational task. In short, these perform targeted rotations (operations) on a single qubit or a quantum register by mapping it onto another state on the Bloch-sphere. 
Thus quantum gates perform linear unitary transformations (manipulation) on an input quantum register $\ket{R}$ and map them onto an output quantum register  $\ket{Q}$ as follows, 
\begin{equation}
    \hat{U}\ket{R} = \ket{Q} := \sum_{i=1}^{N} \alpha_{i_0,...,i_{N-1}} \ket{i_0,....,i_{N-1}} 
\end{equation}
The result of the quantum algorithm $U$ is obtained by measuring the quantum register $\ket{Q}$ with a probability $|\alpha_{i_0, .., i_{N-1}}|^2$. Due to the probabilistic nature of the \emph{quantum processing units} (QPUs), over different executions of $U$ followed by a measurement to determine the result yields different bit-strings according to their probabilities. This is to say that a single execution of a quantum algorithm is like performing a random experiment. Thus, an algorithm is typically executed multiple times, producing a probability density function (probability distribution) of results rather than a single value.  The most probable result in this statistical sample space corresponds to the actual \emph{result} of the quantum algorithm.

\section{Unitary Quantum Logic Gates}\label{quantum-gates}
\renewcommand{\theequation}{C.\arabic{equation}}
\setcounter{equation}{0}
\paragraph{Single Qubit Gate operations}
\begin{GrayBox}
    \textbf{X} gate: The \texttt{Pauli-X} gate is the quantum analog of the classical \texttt{NOT} gate. It performs a bit flip (NOT) operation $\ket{x} \mapsto \ket{\neg x}$.
\end{GrayBox}
The matrix representation of the X-gate reads, 
\begin{align*}
X = 
\left[\begin{array}{cc}
     0 & 1 \\ 
     1 & 0 \\ 
\end{array}\right],
\end{align*}
and the quantum circuit and truth table is, 
\begin{figure}[!htb]
    \centering
    \begin{minipage}{0.5\columnwidth}
     \begin{quantikz}[transparent]
    \lstick{$\ket{x}$} & \gate[style={fill=red!20}]{X} & \rstick{$\ket{\neg  x}$}\qw 
\end{quantikz}
    \end{minipage}%
    \begin{minipage}{.5\columnwidth}
              \begin{tabular}{||c||c||}
    \hline
    Input Qbits & Output Qbits\\
    \hline
    \hline
    $\ket{0}$ &  $\ket{1}$ \\
    $\ket{1}$ &  $\ket{0}$ \\
    \hline
    \end{tabular}
    \end{minipage}
\end{figure}

\begin{GrayBox}
    \textbf{Z} gate: The \texttt{Pauli-Z} gate is the sign \texttt{Flip} gate following the operation, $\ket{x} \mapsto (-1)^x \ket{x}$.
\end{GrayBox}
The matrix representation of the Z-gate reads, 
\begin{align*}
Z = 
\left[\begin{array}{cc}
     1 & 0 \\ 
     0 & -1 \\ 
\end{array}\right],
\end{align*}
with the quantum circuit and truth table being, 
\begin{figure}[!htb]
    \centering
    \begin{minipage}{0.5\columnwidth}
     \begin{quantikz}[transparent]
    \lstick{$\ket{x}$} & \gate[style={fill=red!20},label style=blue]{Z} & \rstick{$(-1)^x \ket{x}$}\qw 
\end{quantikz}
    \end{minipage}%
    \begin{minipage}{.5\columnwidth}
              \begin{tabular}{||c||c||}
    \hline
    Input Qbits & Output Qbits\\
    \hline
    \hline
    $\ket{0}$ &  $\ket{0}$ \\
    $\ket{1}$ &  -$\ket{1}$ \\
    \hline
    \end{tabular}
    \end{minipage}
\end{figure}
\begin{GrayBox}
    \textbf{Y} gate: The Pauli-\texttt{Y} gate performs a rotation by $\pi$ 
around the y-axis. 
\end{GrayBox}
The matrix representation of the Y-gate reads, 
\begin{align*}
Y = 
\left[\begin{array}{cc}
     0 & -i \\ 
     i & 0 \\ 
\end{array}\right],
\end{align*}
with the quantum circuit and truth table being, 
\begin{figure}[!htb]
    \centering
    \begin{minipage}{0.5\columnwidth}
     \begin{quantikz}[transparent]
    \lstick{$\ket{x}$} & \gate[style={fill=red!20},label style=blue]{Y} & \rstick{$i (-1)^x \ket{x}$}\qw 
\end{quantikz}
    \end{minipage}%
    \begin{minipage}{.5\columnwidth}
              \begin{tabular}{||c||c||}
    \hline
    Input Qbits & Output Qbits\\
    \hline
    \hline
    $\ket{0}$ &  i$\ket{0}$ \\
    $\ket{1}$ &  -i$\ket{1}$ \\
    \hline
    \end{tabular}
    \end{minipage}
\end{figure}
\begin{GrayBox}
${\bm{I}}$ gate: The Identity gate $I$ leaves all states unchanged, 
\end{GrayBox}
The matrix representation of the Identity gate reads, 
\begin{align*}
I = 
\left[\begin{array}{cc}
     1 & 0 \\ 
     0 & 1 \\ 
\end{array}\right],
\end{align*}
with the quantum circuit and truth table being, 
\begin{figure}[!htb]
    \centering
    \begin{minipage}{0.5\columnwidth}
     \begin{quantikz}[transparent]
    \lstick{$\ket{x}$} & \gate[style={fill=red!20}]{I} & \rstick{$\ket{x}$}\qw 
\end{quantikz}
    \end{minipage}%
    \begin{minipage}{.5\columnwidth}
              \begin{tabular}{||c||c||}
    \hline
    Input Qbits & Output Qbits\\
    \hline
    \hline
    $\ket{0}$ &  $\ket{0}$ \\
    $\ket{1}$ &  $\ket{1}$ \\
    \hline
    \end{tabular}
    \end{minipage}
\end{figure}
\medskip 

One of the most important and fundamental  single qubit gate is the \texttt{HADAMARD} gate. The Hadamard gate prepares a superposition state, i.e., 
\begin{align}
H \ket{0} &= \frac{1}{\sqrt{2}} 
\left[\begin{array}{cc}
     1 & 1 \\ 
     1 & -1 \\ 
\end{array}\right]
 \left[\begin{array}{c}
     1 \\ 
     0 \\ 
\end{array}\right] = 
 \frac{1}{\sqrt{2}} \left[\begin{array}{c}
     1 \\ 
     1 \\ 
\end{array}\right] = \frac{\ket{0} + \ket{1}}{\sqrt{2}} := \ket{+}
\end{align}
\begin{align}
H \ket{1} &= \frac{1}{\sqrt{2}} 
\left[\begin{array}{cc}
     1 & 1 \\ 
     1 & -1 \\ 
\end{array}\right]
 \left[\begin{array}{c}
     0 \\ 
     1 \\ 
\end{array}\right] = 
 \frac{1}{\sqrt{2}} \left[\begin{array}{c}
     1 \\ 
     -1 \\ 
\end{array}\right] = \frac{\ket{0} - \ket{1}}{\sqrt{2}} := \ket{-}
\end{align}

\texttt{P} gate: The phase (shift) gate \texttt{P} are single qubit gates that leaves the basis state $\ket{0}$ unchanged while induces a phase on state $\ket{1}$, i.e., 
\begin{align}
P(\varphi) \ket{0} &= 
\left[\begin{array}{cc}
     1 & 0 \\ 
     0 & e^{i \varphi} \\ 
\end{array}\right]
 \left[\begin{array}{c}
     1 \\ 
     0 \\ 
\end{array}\right] = 
\left[\begin{array}{c}
     1 \\ 
     0 \\ 
\end{array}\right] = \ket{0}
\end{align}

\begin{align}
P(\varphi) \ket{1} &= 
\left[\begin{array}{cc}
     1 & 0 \\ 
     0 & e^{i \varphi} \\ 
\end{array}\right]
 \left[\begin{array}{c}
     0 \\ 
     1 \\ 
\end{array}\right] = 
e^{i \varphi}
\left[\begin{array}{c}
     0 \\ 
     1 \\ 
\end{array}\right] = e^{i \varphi} \ket{1}
\end{align}

\paragraph{Two Qubit Gate operations}
\begin{GrayBox}
\textbf{CNOT gate}: The \texttt{CNOT} or the   \texttt{Controlled-X} gate  is a 2-qubit gate which flips the second qubit (target qubit) if and only if the first qubit (control qubit) is state $\ket{1}$. It is the quantum analogue of the classical \texttt{XOR} gate and maps the state $\ket{x, y} \mapsto \ket{x \oplus y}$, where the symbol $\oplus$ denotes the XOR logic operation. 
\end{GrayBox}
The matrix representation of the controlled-X gate reads, 
\begin{align*}
\ket{0}\bra{0} \mathbb{I} + \ket{1}\bra{1} X &= 
\left[\begin{array}{cccc}
     1 & 0 & 0 & 0 \\ 
     0 & 1 & 0 & 0 \\
     0 & 0 & 0 & 1 \\
     0 & 0 & 1 & 0 \\
\end{array}\right]
\end{align*}

The quantum circuit and the corresponding truth (logic) table is shown below, 


\begin{figure}[!htb]
    \centering
    \begin{minipage}{0.5\columnwidth}
     \begin{quantikz}[transparent]
    \lstick{$\ket{x}$} & \qw\gategroup[2,steps=3,style={dashed,
    rounded corners,fill=blue!20, inner xsep=2pt},
    background,label style={label position=below,anchor=
    north,yshift=-0.2cm}]{{\sc \texttt{CNOT}}}  & \ctrl{1} & \qw & \rstick{$\ket{x}$}\qw \\
    \lstick{$\ket{y}$} & \qw & \targ{} & \qw & \rstick{$\ket{x \oplus y}$}\qw 
\end{quantikz}
    \end{minipage}%
    \begin{minipage}{.5\columnwidth}
              \begin{tabular}{||c||c||}
    \hline
    Input Qbits & Output Qbits\\
    \hline
    \hline
    $\ket{00}$ &  $\ket{00}$ \\
    $\ket{01}$ &  $\ket{01}$ \\
    $\ket{10}$ &  $\ket{11}$ \\
    $\ket{11}$ &  $\ket{10}$ \\
    \hline
    \end{tabular}
    \end{minipage}
\end{figure}

\begin{GrayBox}
\textbf{CZ gate}: The \texttt{controlled Z} gate is a 2-qubit gate flips the sign of the state $\ket{11}$, while leaving the other states unaffected, ie., $\ket{x, y} \mapsto (-1)^x\ket{x ,y}$
\end{GrayBox}

The matrix representation of the controlled-Z reads, 
\begin{align*}
\left[\begin{array}{cccc}
     1 & 0 & 0 & 0 \\ 
     0 & 1 & 0 & 0 \\
     0 & 0 & 1 & 0 \\
     0 & 0 & 0 & -1 \\
\end{array}\right],
\end{align*}
and the corresponding quantum circuit and truth table are also presented below,  
\begin{figure}[!htb]
    \centering
    \begin{minipage}{0.5\columnwidth}
     \begin{quantikz}[transparent]
    \lstick{$\ket{x}$} & \qw\gategroup[2,steps=3,style={dashed,
    rounded corners,fill=pink!50, inner xsep=2pt},
    background,label style={label position=below,anchor=
    north,yshift=-0.2cm}]{{\sc \texttt{CZ}}}  & \ctrl{1} & \qw & \rstick{$\ket{x}$}\qw \\
    \lstick{$\ket{y}$} & \qw & \gate{Z} & \qw & \rstick{$(-1)^x\ket{y}$}\qw 
\end{quantikz}
    \end{minipage}%
    \begin{minipage}{.5\columnwidth}
  \begin{tabular}{||c||c||}
\hline
Input Qbits & Output Qbits\\
\hline
\hline
$\ket{00}$ &  $\ket{00}$ \\
$\ket{01}$ &  $\ket{01}$ \\
$\ket{10}$ &  $\ket{10}$ \\
$\ket{11}$ &  -$\ket{11}$ \\
\hline
\end{tabular}
    \end{minipage}
\end{figure}

\begin{GrayBox}
\textbf{SWAP gate}: The 2-qubit SWAP gate swaps the qubit states and maps a state $\ket{a, b} \mapsto \ket{b, a}$.
\end{GrayBox}
The permutation matrix representation reads, 
\begin{align}
\left[\begin{array}{cccc}
     1 & 0 & 0 & 0 \\ 
     0 & 0 & 1 & 0 \\
     0 & 1 & 0 & 0 \\
     0 & 0 &  0 & 1 \\
\end{array}\right]
\end{align}
while the quantum circuit representation and truth table is, 
\begin{figure}[H]
    \centering
    \begin{minipage}{0.5\columnwidth}
     \begin{quantikz}[transparent]
    \lstick{$\ket{x}$} & \qw\gategroup[2,steps=3,style={dashed,
    rounded corners,fill=purple!60, inner xsep=2pt},
    background,label style={label position=below,anchor=
    north,yshift=-0.2cm}]{{\sc \texttt{SWAP}}}  & \swap{1} & \qw & \rstick{$\ket{y}$}\qw \\
    \lstick{$\ket{y}$} & \qw & \swap{-1} & \qw & \rstick{$\ket{x}$}\qw 
\end{quantikz}
    \end{minipage}%
    \begin{minipage}{.5\columnwidth}
  \begin{tabular}{||c||c||}
\hline
Input Qbits & Output Qbits\\
\hline
\hline
$\ket{00}$ &  $\ket{00}$ \\
$\ket{01}$ &  $\ket{10}$ \\
$\ket{10}$ &  $\ket{01}$ \\
$\ket{11}$ &  $\ket{11}$ \\
\hline
\end{tabular}
    \end{minipage}
\end{figure}

\emph{Pauli Group}: 
Pauli matrices generate a discrete group closed under multiplication, called the Pauli group $P_n$.
The set $P_n$ consists of \texttt{n}-fold tensor product of the Pauli operators (Pauli strings) multiplied by a factor $\gamma \in \{\pm{1}, \pm{i}\}$ accounting to 16-elements. An example of the Pauli group for $n = 2$, are the 2-fold tensor product of the Pauli gates, $\{\gamma I \otimes I, \gamma I \otimes X, \gamma I \otimes Y, \gamma I \otimes Z, \gamma X \otimes I, \gamma X \otimes X, .... \gamma Z \otimes Z\}$. 
\medskip 

\begin{definition}
    The \emph{normalizer} of a subgroup H of a group (or semigroup) G is defined as: 
\end{definition}
\begin{equation*}
        N_G(H) = \big\{g \in G | gHg^{-1} = H\big\}
    \end{equation*}

\emph{Clifford gates}: The Clifford group on n qubits, $\mathcal{C}_n$, are the set of unitary operations that normalize the Pauli group $P_n$. That is, $U \in \mathcal{C}_n$ if $UpU^{\dagger} \in P_n,  \ \forall p \in P_n$. 
The \emph{Clifford gates} are unitary operators in $\bigcup_{n\geq 1} \mathcal{C}_n$. A quantum circuit constructed merely out of Clifford gates is called the Clifford circuit~\cite{Grier2022classificationof}. It performs qubit operations on some designated set of initialized qubits, while preserving the state of remaining the ancilla qubits.

The Clifford gate set consists of three gates, namely, the CNOT (controlled-NOT), the Hadamard gate $H$ and the Phase gate $P$.



\subsection{Three and Multi-qubit gate operations}
\begin{GrayBox}
\textbf{TOFFOLI gate}: The \texttt{CCNOT} (controlled-controlled NOT gate) or the \texttt{TOFFOLI} gate is a three qubit universal reversible quantum gate. If the first two qubits are in state $\ket{1}$, then it flips the last qubit state, i.e.,  $\ket{x,y,z} \mapsto \ket{x,y, z \oplus (x \land y)}$
\end{GrayBox}
The Toffoli gate in its matrix form reads,  
\begin{align*}
\left[\begin{array}{cccccccc}
     1 & 0 & 0 & 0 & 0 & 0 & 0 & 0\\ 
     0 & 1 & 0 & 0 & 0 & 0 & 0 & 0\\
     0 & 0 & 1 & 0 & 0 & 0 & 0 & 0\\
     0 & 0 & 0 & 1 & 0 & 0 & 0 & 0\\
     0 & 0 & 0 & 0 & 1 & 0 & 0 & 0\\
     0 & 0 &  0 & 0 & 0 & 1 & 0 & 0\\
     0 & 0 & 0 & 0 & 0 & 0 & 0 & 1\\
     0 & 0 &  0 & 0 & 0 & 0 & 1 & 0\\
\end{array}\right].
\end{align*}

The three qubit Toffoli gate has an equivalent quantum circuit representation solely interms of two qubit gates $H$, $T$ and $T^{\dagger}$. and are . Also, the truth table of this quantum logic gate are presented below, 

\begin{figure}[!htb]
    \centering
     \begin{quantikz}[transparent]
     \qw &  \qw\gategroup[3,steps=3,style={dashed,
   rounded corners,fill=orange!30, inner xsep=2pt},
    background,label style={label position=below,anchor=
 north,yshift=-0.2cm}]{{\sc \texttt{CCNOT}}} & \ctrl{2} & \qw & \qw \\
    \qw & \qw & \ctrl{1} & \qw & \qw \\ 
    \qw & \qw & \targ & \qw &  \qw & \qw 
    \end{quantikz}%
    $\equiv$
    \begin{minipage}{0.80\columnwidth}
\begin{quantikz}[row sep=0.9mm,column sep=0.9mm]
& \qw      & \qw      & \qw              & \ctrl{2} & \qw      & \qw      & \qw              & \ctrl{2} & \qw      & \ctrl{1} & \gate[style={fill=yellow!50}]{T}         & \ctrl{1} & \qw \\
& \qw      & \ctrl{1} & \qw              & \qw      & \qw      & \ctrl{1} & \qw              & \qw      & \gate[style={fill=yellow!50}]{T} & \targ{}  & \gate[style={fill=yellow!100}]{T^\dagger} & \targ{}  & \qw \\
& \gate[style={fill=red!50}]{H} & \targ{}  & \gate[style={fill=yellow!100}]{T^\dagger} & \targ{}  & \gate[style={fill=yellow!50}]{T} & \targ{}  & \gate[style={fill=yellow!100}]{T^\dagger} & \targ{}  & \gate[style={fill=yellow!50}]{T} & \gate[style={fill=red!50}]{H} & \qw              & \qw      & \qw
\end{quantikz} 
    \end{minipage}
\end{figure}
\begin{center}


     \begin{tabular}{||c||c||}
\hline
Input Qbits & Output Qbits\\
\hline
\hline
$\ket{000}$ &  $\ket{000}$ \\
$\ket{001}$ &  $\ket{001}$ \\
$\ket{010}$ &  $\ket{010}$ \\
$\ket{011}$ &  $\ket{011}$ \\
$\ket{100}$ &  $\ket{100}$ \\
$\ket{101}$ &  $\ket{101}$ \\
$\ket{110}$ &  $\ket{111}$ \\
$\ket{111}$ &  $\ket{110}$ \\
\hline
\end{tabular}
\end{center}
\medskip 

\begin{GrayBox}
\textbf{FREDKIN gate}: The \texttt{CSWAP} (controlled-swap gate) is a three qubit universal reversible quantum gate. If and only if the first  qubit state is state $\ket{1}$, it leaves the the first qubit  unchanged and swaps the last two bits.
\end{GrayBox}
The Fredkin gate in its permutation matrix form reads,  
\begin{align*}
\left[\begin{array}{cccccccc}
     1 & 0 & 0 & 0 & 0 & 0 & 0 & 0\\ 
     0 & 1 & 0 & 0 & 0 & 0 & 0 & 0\\
     0 & 0 & 1 & 0 & 0 & 0 & 0 & 0\\
     0 & 0 & 0 & 1 & 0 & 0 & 0 & 0\\
     0 & 0 & 0 & 0 & 1 & 0 & 0 & 0\\
     0 & 0 &  0 & 0 & 0 & 0 & 1 & 0\\
     0 & 0 & 0 & 0 & 0 & 1 & 0 & 0\\
     0 & 0 &  0 & 0 & 0 & 0 & 0 & 1\\
\end{array}\right].
\end{align*}

The three qubit Fredkin gate has an equivalent quantum circuit that can be solely constructed interms of CNOT, $V$ \&  $V^{\dagger}$ qubit gates ~\cite{PhysRevA.53.2855}. Here, \begin{align*}
V = 
\left[\begin{array}{cc}
     0 & 1 \\ 
     1 & 0 \\ 
\end{array}\right]^{\frac{1}{2}},
\end{align*} is the square-root of the Pauli $X$ gate. Infact, the \texttt{FREDKIN} gate is just the \texttt{TOFFOLI} gate with two CNOTs on its either sides. The corresponding truth table for the \texttt{FREDKIN} gate is presented below, 

\begin{figure}[!htb]
    \centering
  \begin{quantikz}[transparent]
    \lstick{$\ket{x}$} & \qw\gategroup[3,steps=3,style={dashed,
    rounded corners,fill=yellow!40, inner xsep=2pt},
    background,label style={label position=below,anchor=
    north,yshift=-0.2cm}]{{\sc \texttt{C-SWAP}}}  & \ctrl{2} & \qw & \rstick{$\ket{x}$}\qw \\
    \lstick{$\ket{y}$} & \qw & \targX{} & \qw & \rstick{$\ket{y}$}\qw \\ 
    \lstick{$\ket{z}$} & \qw & \swap{-1} & \qw & \rstick{$\ket{z \oplus x}$}\qw 
\end{quantikz}
    $\equiv$
    \begin{minipage}{\columnwidth}

\begin{quantikz} 
& & \qw \gategroup[3,steps=9,style={dashed,rounded
    corners,fill=yellow!40, inner
    xsep=2pt},background,label style={label
    position=below,anchor=north,yshift=-0.2cm}]{{\texttt{C-SWAP}}}  & \qw & \qw &  \ctrl{2} & \ctrl{1} & \qw &  \qw & \ctrl{1} \qw & \qw & \qw & & \\ 
&  &
    \targ{} &\ctrl{1} & \qw & \qw & \targ{}  &  \ctrl{1} & \qw & \targ{} & \targ{} & \qw \\
& & \ctrl{-1} & \gate[style={fill=green!100}]{V} & \qw & \gate[style={fill=green!100}]{V} & \qw  &  \gate[style={fill=green!100}]{V^{\dagger}} & \qw & \qw & \ctrl{-1} & \qw \end{quantikz}
    \end{minipage}
\end{figure}
\begin{center}

\begin{tabular}{||c||c||}
\hline
Input Qbits & Output Qbits\\
\hline
\hline
$\ket{000}$ &  $\ket{000}$ \\
$\ket{001}$ &  $\ket{001}$ \\
$\ket{010}$ &  $\ket{010}$ \\
$\ket{011}$ &  $\ket{011}$ \\
$\ket{100}$ &  $\ket{100}$ \\
$\ket{101}$ &  $\ket{110}$ \\
$\ket{110}$ &  $\ket{101}$ \\
$\ket{111}$ &  $\ket{111}$ \\
\hline
\end{tabular}
\end{center}

\section{Engineering complex quantum circuits}\label{complex-quantum-circuits}
\renewcommand{\theequation}{D.\arabic{equation}}
\setcounter{equation}{0}
\textbf{Entangled states}: Separable quantum state can be expanded in its computational basis as, $\ket{\Psi} = \ket{\psi_0} \otimes \ket{\psi_1} \otimes .... \otimes\ket{\psi_{n-1}}$. Whereas, an \emph{Entangled} quantum state $\ket{\zeta}$ cannot be decomposed into tensor product states, i.e.,  
$\ket{\zeta} \neq \ket{\xi_0} \otimes \ket{\xi_1} \otimes .... \otimes\ket{\xi_{n-1}}$.
The most simple and maximally entangled quantum states can be acheived by entangling 2-qubits in 4 different manners, also known as the \texttt{Bell} states or \texttt{EPR} (\textit{Einstein-Podolski-Rosen}) states,  
\begin{align*}
    \ket{\Phi^+} = \frac{\ket{00} + \ket{11}}{\sqrt{2}}, 
    \ket{\Phi^-} = \frac{\ket{00} - \ket{11}}{\sqrt{2}}, \\ 
    \ket{\Psi^+} = \frac{\ket{01} + \ket{10}}{\sqrt{2}},  
    \ket{\Psi^-} = \frac{\ket{01} - \ket{10}}{\sqrt{2}}, \\ 
\end{align*}
We demonstrate the preparation of the $\ket{\Phi^{+}}$ state via a quantum circuit. 
\begin{center}
\begin{quantikz}
\gategroup[wires=2,steps=11,style={rounded corners,fill=blue!20}, background]{}
&\lstick{$|{0}\rangle$} & \gate[style={fill=red!50}]{H}&\ctrl{1} & \qw &\qw &\qw 
 \rstick[wires=2]{$\frac{|{00}\rangle + |{11}\rangle}{\sqrt{2}}$} 
\\
&\lstick{$|{0}\rangle$} & \qw& \targ{} & \qw & \qw &\qw

&&&&
\end{quantikz}
\end{center}

In the 3-qubit case, there exists non bi-separbale classes of entangled states in quantum computing are for e.g., the 3-qubit \texttt{Greenberger-Horne-Zeilinger} (GHZ) state~\cite{PhysRevLett.106.130506}, 
\begin{align*}
    \ket{\mathbf{GHZ}} = \frac{1}{\sqrt{2}}(\ket{000} + \ket{111})
\end{align*}

\begin{center}
\begin{quantikz}
\gategroup[wires=3,steps=11,style={rounded corners,fill=blue!20}, background]{}
&\lstick{$|{0}\rangle$} & \gate[style={fill=red!50}]{H}&\ctrl{1} & \ctrl{2} &\qw &\qw 
 \rstick[wires=3]{$\frac{|{000}\rangle + |{111}\rangle}{\sqrt{2}}$} 
\\
&\lstick{$|{0}\rangle$} & \qw& \targ{} & \qw & \qw &\qw
\\
&\lstick{$|{0}\rangle$} & \qw & \qw & \targ{} & \qw &\qw
&&&&
\end{quantikz}
\end{center}

another highly important entangled 3-qubit state that is inequivalent to the GHZ state is the \texttt{W} state~\cite{PhysRevA.62.062314}, 
\begin{align*}
    \ket{\mathbf{W}} = \frac{1}{\sqrt{3}} (\ket{001} + \ket{010} + \ket{100})
\end{align*}
\newpage

\section{Matrix Size}\label{matrix-size}
\renewcommand{\theequation}{E.\arabic{equation}}
\setcounter{equation}{0}
The symmetric bipartite matrix $B_{IJ}$ can be partioned into a block matrix form. 

\begin{equation}
B_{\text{IJ}} = \begin{pmatrix}
        0^{n \times n} & E_{\text{IJ}} \\
        E_{\text{IJ}}^T  & 0^{n \times n} \\
     \end{pmatrix}
     \label{eq:md-input-matrix}
\end{equation}

The diagonal entries of the bipartite block matrix contains the zero matrix. The off-diagonal entries $E_{\text{IJ}}$ ($n \times n$ matrix) and its transpose $E_{\text{IJ}}^T$ contain as entries the Euclidean distances $d_{ij}$ as defined in Eq.(\ref{Eq:dist}). The Euclidean distance (metric)  is a function defined on vector space $\mathbb{V}$,
\begin{align*}
  d : \mathbb{V} \times \mathbb{V} \mapsto \mathbb{R}.
\end{align*} 
Therefore, the block matrix $E_{\text{IJ}}$ takes values only over the field $\mathbb{R}$. The  matrix representation is given by, 
\begin{equation}
 E_{\text{IJ}} = \begin{pmatrix} 
d_{ij}^{11}  & \cdots &  d_{ij}^{1n} \\
\vdots & \ddots & \vdots \\
d_{ij}^{n1} & \cdots &  d_{ij}^{nn}   \\
\end{pmatrix}.
\end{equation}

For given two segments $I$ and $J$, the Euclidean metric between $C_\alpha$ carbon atoms $i$ and $j$ reads,
\begin{equation}
   d_{ij} =  d(i,j) = \sqrt{(i_x - j_x)^2 + (i_y - j_y)^2 + (i_z - j_z)^2}.
   \label{eq:euclidean}
\end{equation}
\begin{theorem}
Input bipartite distance matrix $B_{IJ}$ is a \textbf{Hermitian} matrix.
\end{theorem}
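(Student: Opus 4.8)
The plan is to reduce the Hermitian condition to plain symmetry, exploiting the fact that every entry of $B_{IJ}$ is real. First I would observe that by Eq.~(\ref{eq:euclidean}) each entry of the off-diagonal block $E_{IJ}$ is a Euclidean distance $d_{ij}=d(i,j)\in\mathbb{R}$, and the two diagonal blocks are the real zero matrix $0^{n\times n}$. Hence $\overline{B_{IJ}}=B_{IJ}$ entrywise, so the defining condition $B_{IJ}=\overline{B_{IJ}}^{T}$ for a Hermitian matrix is equivalent here to the real-symmetry condition $B_{IJ}=B_{IJ}^{T}$.

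The second step is to verify that symmetry directly from the block form in Eq.~(\ref{eq:md-input-matrix}). Transposing a $2\times2$ block matrix transposes each block and interchanges the two off-diagonal blocks, so
\begin{equation*}
B_{IJ}^{T}=\begin{pmatrix}(0^{n\times n})^{T}&(E_{IJ}^{T})^{T}\\E_{IJ}^{T}&(0^{n\times n})^{T}\end{pmatrix}=\begin{pmatrix}0^{n\times n}&E_{IJ}\\E_{IJ}^{T}&0^{n\times n}\end{pmatrix}=B_{IJ},
\end{equation*}
using $(0^{n\times n})^{T}=0^{n\times n}$ and $(E_{IJ}^{T})^{T}=E_{IJ}$. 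Chaining the two steps gives $B_{IJ}^{\dagger}=\overline{B_{IJ}}^{T}=B_{IJ}^{T}=B_{IJ}$, which is precisely the definition of a Hermitian matrix.

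There is no genuine obstacle in the argument; the care needed is purely at the level of bookkeeping with the block structure and the reality of the distances. One subtlety worth flagging is that the definition of Hermitian in Appendix~A also demands invertibility; if one insists on that clause, one should additionally note that $B_{IJ}$ is invertible exactly when $E_{IJ}$ has full rank, since the eigenvalues of $B_{IJ}$ are $\pm$ the singular values of $E_{IJ}$, and this holds generically when the two atom segments consist of distinct $C_\alpha$ positions. I would also remark, for consistency, that this is compatible with the three fundamental properties of bipartite distance matrices recalled earlier — symmetric, zero diagonal, strictly positive off-diagonal — and that Hermiticity is exactly what licenses the Pauli encoding of $\hat{B}_{IJ}$ used as the VQE cost function in Eq.~(\ref{eq:background:vqa:cost}), as tensor products of Pauli operators span the real vector space of Hermitian matrices.
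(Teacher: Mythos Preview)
Your proof is correct and follows essentially the same route as the paper's: both reduce $B_{IJ}^{\dagger}=B_{IJ}$ to real symmetry by using that the Euclidean-distance entries are real, and then verify symmetry from the $2\times 2$ block structure of Eq.~(\ref{eq:md-input-matrix}). Your version is in fact more explicit about the block-transpose step than the paper's, and your side remark on the invertibility clause in the Appendix~A definition (which the paper's own proof does not address) is a fair observation.
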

\begin{proof}
Let $B_{IJ} \in \mathbb{R}^{2n} \times \mathbb{R}^{2n}$, with $n \in \mathbb{N}$. Note that from Equation~\ref{eq:md-input-matrix},
\begin{equation*}
B_{\text{IJ}}^{\dagger} = \begin{pmatrix}
    0^{n \times n} & \overline{E}_{\text{IJ}} \\
        \overline{E}_{\text{IJ}}^T & 0^{n \times n} \\ 
     \end{pmatrix}^T = \begin{pmatrix}
    0^{n \times n} & E_{\text{IJ}} \\
        E_{\text{IJ}}^T & 0^{n \times n} \\ 
     \end{pmatrix}^T = B_{\text{IJ}}
\end{equation*}
Hence, $B_{\text{IJ}}$ is a \textbf{real symmetric matrix}. 
Since, every real symmetric matrix is an Hermitian matrix (operator), $B_{\text{IJ}}$ is an Hermitian operator.
\end{proof}
Thus, the set of eigenvalues $\Lambda$ of the bipartite distance matrix all belong in $\mathbb{R}$. Thus, the Hermitian bipartite matrix $\hat{B}_{\text{IJ}}$ qubit encoding (via mapping on to Pauli operators) without any further manipulation.

\section{C-SWAP test methodology for $C_{\alpha}$ atom distance estimation}
\renewcommand{\theequation}{B.\arabic{equation}}
\setcounter{equation}{0}
First introduced in the context of quantum fingerprinting~\cite{PhysRevLett.87.167902}. The procedure for engineering the SWAP test circuit involves, (i) Entangling qubit registers consisting of the mapped classical data with an ancillary/helper qubit $\ket{0}$ and, (ii) estimating the inner product between two different states through repeated measurements of the ancillary qubit. Thus, this quantum algorithm measures the so-called \emph{fidelity}, which is nothing but the inner product or overlap between two different quantum states.  The fidelity $F$ between two normalized quantum states $\ket{\phi}, \ket{\psi}$ is mathematically expressed as, $F(\phi, \psi)= |\bra{\phi}\ket{\psi}|^2 \in [0, 1]$. Hence, the higher the fidelity, the closer are the quantum states to each other, i.e, $F(\phi, \psi) = 1$, meaning the quantum states are parallel while $F(\phi, \psi) = 0$ meaning orthogonal quantum states. 

From an application point of view, this quantum routine has been integrated into larger quantum circuits to achieve the target tasks. For example, it has been used as a primary engine for speeding-up matrix multiplications boiling down to merely $\mathcal{O}(N^2)$~\cite{Zhang2016} time-complexity. Moreover it has been extensively used in the  domains of \textit{quantum machine learning} and \emph{big-data} analysis (also sometimes called \emph{quantum big data}) due to its exponential speedup offered in calculating distances between huge amounts of vector (tensor) datasets~\citep{https://doi.org/10.48550/arxiv.1307.0411, 10.5555/2871393.2871400, kopczyk2018quantum}.   

\paragraph{Circuit Description}\label{circ_des}
We initialize an ancillary (helper) qubit $\ket{0}$, two quantum registers $\ket{\psi}$ and $\ket{\phi}$.
The initial state of combined tensor product is 
\begin{equation}
\ket{\Psi_A} = \ket{0} \otimes \ket{\phi} \otimes \ket{\psi},
\end{equation}

Encoded into the states $\ket{\phi}$ and $\ket{\psi}$ are classical data atom coordinates using the \texttt{AMPLITUDE} encoding method. In \emph{amplitude encoding}, a classical vector or tensor, $X = \big[x_1, x_2, ..., x_n\big]^{T} \in \mathbb{R}^N $ is mapped onto the quantum device by implementing the following algorithm written as a pseudocode: 
\begin{algorithm}[H]
\textbf{Input:} Classical data, $X  = \big[x_0, x_1, ..., x_{p-1}\big]^{T} \in \mathbb{R}^p$, number of qubits $n$.\\
\textbf{Result:} Quantum data with coefficients of $X \in \mathbb{R}^p$ encoded as amplitudes of the state-vector $X \mapsto \ket{Q_X} = \frac{1}{||\vectorbold{x}||^2}\sum_{i=0}^{n-1} x_i \ket{i}$
\begin{algorithmic}[1]
\STATE $p \gets  \textrm{\texttt{LEN}$(X)$}  $\label{alg:genInput} \\ 

\IF {$\lceil{log_2 (n)}\rceil - p = 0$}  
    \STATE $Q_X \gets X$ \COMMENT{Calculate magnitude (norm) squared of $X$}\\ 
    \STATE $Q_X \mapsto{\ket{Q_X}} = \frac{1}{||X||^2}\sum_{i=0}^{n-1} x_i \ket{i}$
\ENDIF
\IF {$\lceil{log_2 (n)}\rceil - p =  (k - 1) $} 
    \STATE $Q_X \gets  \underbrace{\big[\underbracket{x_0, x_1, ..., x_{p-1}}_{\text{p-entries}}, \underbracket{0,.....,0}_{\text{(k-1) -entries}}\big]^{T}}_{\text{$p + (k-1) = n $}}$  \COMMENT{Pad the vector $X$ with ($k-1$) $0$'s to convert to dimension $n$.} \\
    \STATE $Q_X \mapsto{\ket{Q_X}} = \frac{1}{||X||^2}\big(\sum_{i=0}^{p-1} x_i \ket{i} + \sum_{i=p}^{n-1} 0 \ket{i}$\big) \COMMENT{Calculate magnitude (norm) squared of $X$}\\  
\ENDIF

\RETURN $\ket{Q_X}$\\
\end{algorithmic}
\caption{Amplitude Encoding Schema}
\label{alg:data-driven}
\end{algorithm}
\medskip

Thus, an $n$-dimensional classical data vector (tensor) can be efficiently encoded into the wavefunction (state vector), requiring merely $\mathtt{\lceil log_2(n) \rceil}$ qubits~\cite{10.5555/3309066}. Here, we describe the methodology to adapt the SWAP test quantum circuit~\citep {PhysRevLett.87.167902, 10.5555/3309066, kopczyk2018quantum} for $C_{\alpha}$ atoms distance matrix calculations.  

\subsubsection{Problem Size and Qubit Mapping}\label{ss:qmapping}

\textbf{Input matrix size}: While on classic architectures the amount of atoms and segments we can process is dependent on the amount of \texttt{RAM} available in the system, in the quantum machines we are limited by the amount of qubits of the machine.  As per the limitation of the target machines, we limit our input size, i.e. the length of amino acid segments used to generate the distance or bipartite matrix. For a chosen segment of length $k$ (consisting of k atoms), the Euclidean distance matrix $D$ (within the same segment) is a $k \times k$ symmetric matrix with diagonal entries as zeros. The constructed bipartite distance matrices $B_{\text{IJ}}$ (between two separate segments) is a $2\text{k} \times 2\text{k}$ dimensional matrix with a $\text{k} \times \text{k}$ block matrices $E_{\text{IJ}}$ (cf. \hyperref[matrix-size]{Appendix E}). 

Exploiting symmetries of the input CVs leads to a significant dimensional reduction on the matrix sizes, making it feasible to encode smaller input sizes onto quantum devices. In our case, it suffices to calculate only $\frac{k (k+1)}{2}$ unique entries of $D$ and $k^2$ entries for the block matrix of $E_{\text{IJ}}$ of the $2k \times 2k$ sized  $B_{\text{IJ}}$ (since the other block is just the tranpose of the matrix~cf. Appendix E) respectively. Due to such dimensional reduction properties intrinsic to our MD system, it becomes viable to cut-down the input system size for our Target Task I and Target Task II on the NISQ hardware.
\medskip

The \textit{quantum state preparation} for the $C_{\alpha}$ atom coordinates done via pseudo code \ref{alg:data-driven} is depicted in the \emph{Data Encoding} block in Figure \ref{fig:cswap}.

\begin{figure}[H]
    \centering
    \includegraphics [width=0.90\columnwidth]{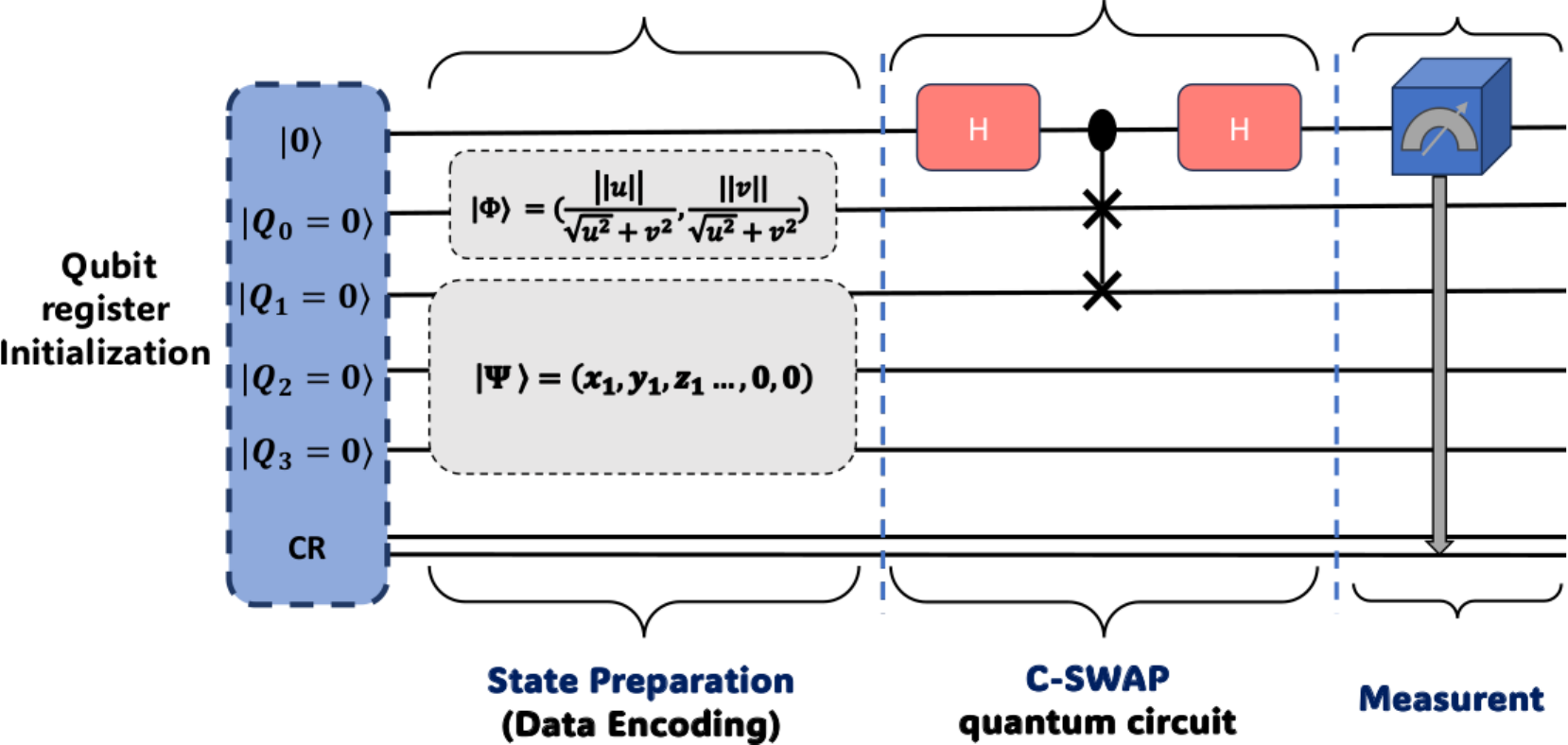}
    \caption{\texttt{C-SWAP} quantum subroutine implementation for distance matrix generation.}
    \label{fig:cswap}
\end{figure}

Let the position vectors (classical data) corresponding to two different atoms be  denoted as $\Vec{u},\Vec{v}$. Then the qubits can initialized with the state amplitudes (coefficients), in such a way that:
\begin{subequations} \label{encoding}
\begin{equation}
\ket{\phi} = \frac{1}{\sqrt{W}}(||\vec{u}||\ket{0} - ||\vec{v}||\ket{1} ),
\end{equation}
\begin{equation}
\ket{\psi} = \frac{1}{\sqrt{2}}(\ket{u, 0} + \ket{v, 1}),
\end{equation}
\end{subequations}
with $||\vec{u}||$, $||\vec{v}||$ being the Euclidean norm of the coordinates  and $W = ||\vec{u}||^2 + ||\vec{v}||^2$. The corresponding amplitude-encoded vectors read,
\begin{subequations}\label{encodingII}
\begin{equation}
\ket{u} = \sum_{i = 0}^{N-1} \frac{u_i}{||\vec{u}||}\ket{i},    
\end{equation}
\begin{equation}
\ket{v} = \sum_{i = 0}^{N-1} \frac{v_i}{||\vec{v}||}\ket{i} 
\end{equation}
\end{subequations}
Since each qubit has two possible states, the number of coordinates of the classical vector data must necessarily be $2^n$ \cite{10.5555/3309066}.
The real-time coordinates of $C_{\alpha}$ atoms whose positions vectors evolves as a function of time $t$,  $\{x_i (t), y_i (t), z_i (t)\}_{1 \leq i \leq n} \in \mathbb{R}^3$. These three-dimensional position vector $\Vec{u}$ must be padded with a $0$ as the fourth coordinate. This leads to a vector of the form $\big(x(t),y(t),z(t),0\big)$ with $2^2 = 4$ coordinates. This allows a suitable encoding scheme onto a $2$-qubit quantum register. 
\medskip 

In Figure \ref{fig:cswap}, the $3$-qubit quantum register $\ket{\psi}=$ $\ket{Q_1, Q_2, Q_3}$, initialized as $\ket{0, 0, 0}$ is then encoded with the concatenated atom-pair coordinates values\footnote{For the sake of brevity, we drop the time-dependence of the atom coordinates.} $(x_1, y_1, z_1, x_2, y_2, z_2, 0, 0) \in \mathbb{R}^8$. Here, it is clear that $\mathtt{0}$ padding is required for embedding, since three qubits are only capable of storing a vector of dimension $2^3 = 8$. 

The is followed by an application of the \texttt{HADAMARD} gate \texttt{H} (cf. Appendix B) on the ancillary qubit. Implementing controlled swap operation, is performed using the three-qubit \texttt{FREDKIN} gate (cf. Appendix B)  on the other two registers. The ancillary qubit works like a control bit. The total state of the system after these two gate operations is,
\begin{equation}
\ket{\Psi_B} = \frac{1}{\sqrt{2}}\big(\ket{0,\psi,\phi} + \ket{1,\phi,\psi}\big).
\end{equation}
The application of another Hadamard gate on the ancillary qubit $\ket{0}$ yields
\begin{equation*}
\frac{1}{2}\ket{0}(\ket{\phi,\psi}+\ket{\psi,\phi}) + \frac{1}{2} \ket{1}(\ket{\phi,\psi}+\ket{\psi, \phi}).
\end{equation*}
On applying the Hadamard gate, the  probability of measuring state $'0'$, i.e., of control qubit yields,
\begin{equation*}
    Pr(0) = \frac{1}{2} + \frac{1}{2}|\bra{\phi}\ket{\psi}|^2.
\end{equation*}
Euclidean distances $d(i, j)$ between $C_{\alpha}$ atoms can be obtained using Eqs.( \ref{encoding},~\ref{encodingII}).
\begin{equation}
   d(\vec{u}, \vec{v})^2 = 2 W |\bra{\phi} \ket{\psi}|^2 = 4 W (Pr(0) - 0.5).
\end{equation}
\medskip 

Our simulations comprised of a similar strategy as put-forth in~\cite{PhysRevLett.87.167902}. Here, we perform a single execution for each $C_{\alpha}$ atom pair using the SWAP test subroutine.  Hence, a total number of $n$ repeated circuit executions were required for calculating the distances between $n$ atom pairs using the quantum architecture.

\section{The Variational Quantum Eigensolver Machinery}
\renewcommand{\theequation}{F.\arabic{equation}}
\setcounter{equation}{0}
\subsection{The mathematics of VQE} 
The theoretical groundwork for VQE starts with the variational Rayleigh-Ritz functional. Given a Hamiltonian (Hermitian operator) $\hat{H}$ and a intial trial wavefunction with respect to some vector-valued parameter $\boldsymbol{\vartheta}$ is $\ket{\Psi (\boldsymbol{\vartheta)}}$ (ansatz wavefunction). The Rayleigh-Ritz variational principle~\cite{Ritz1909} sets an optimized upper bound for the ground state energy $E_0$ (lowest possible expectation/average value) associated with the Hamiltonian \footnote{The expectation value of a matrix $\hat{O}$ with respect to a vector $\ket{\phi}$ is defined as  $\frac{\bra{\phi}\hat{O}\ket{\phi}}{\bra{\phi}\ket{\phi}}$.},  $E_0$, i.e., 
\begin{equation}\label{vqe}
    E_0 := \langle \hat{H} \rangle_{\boldsymbol{\vartheta}} \leq \frac{\bra{\Psi(\boldsymbol{\vartheta})}\hat{H (\boldsymbol{\vartheta})}\ket{\Psi (\boldsymbol{\vartheta})}}{\bra{\Psi (\boldsymbol{\vartheta})}\ket{\Psi (\boldsymbol{\vartheta})}}.
\end{equation}
The VQE machinery finds a paramterization of the wavefunction $\ket{\Psi}$, such that the \texttt{expectation value} of the Hermitian operator $\hat{H}$ is minimized and approaches closer to the lowest eigenvalue $E_0$ after successive iterative optimization steps~\cite{Tilly2021}.   
\medskip 

A PQC consisting of an initialized qubit register and a set of unitary quantum gates, can only perform a series of unitary transformations and measurements. Inorder to execute such a minimization (optimization) task as described in Eq.(\ref{vqe}) using quantum circuits, the user must define a  so-called ansatz wavefunction (trial eigenvector) $\ket{\Psi(\boldsymbol{\vartheta})}$. An initial  generic parametrized unitary quantum gate  $U(\boldsymbol{\vartheta})$ applied onto an initialized qubit register state, say $U(\boldsymbol{\vartheta}) \ket{0}^{\otimes N} = \ket{\Psi(\boldsymbol{\vartheta})}$ ($\forall$ $\boldsymbol{\vartheta} \in (-\pi, \pi]$) generates the ansatz wavefunction. 
\medskip 

The Hamiltonian $\hat{H}$ (Hermitian Matrices in general) can be encoded onto the Pauli operators multiplied by weights (linear combination of elements in the Pauli group), i.e., 

\begin{equation}
    \hat{H} = \sum_{\alpha} w_a \hat{P}_a, \ \forall  \hat{P}_a \in \mathcal{P}_n.
\end{equation}
Here, $w_a$ are the set of weights (coefficients) and $\hat{P}_a$ are Pauli strings in $\mathcal{P}_n$respectively~\cite{Tilly2021}. In the Pauli decomposed version, Eq.(\ref{vqe:cost})
 Thus the VQE optimization problem, designed using the quantum circuit reads,
\begin{align}\label{vqe:cost}
    E_{VQE} = \min_{\theta} \bra{\boldsymbol{0}} U^{\dagger} (\boldsymbol{\theta}) \hat{H} U(\boldsymbol{\theta}) \ket{\boldsymbol{0}} = \min_{\boldsymbol{\theta}} \sum_{a}^{\mathcal{P}} w_a \bra{\boldsymbol{0}} U^{\dagger} (\boldsymbol{\theta}) \hat{P_a} U(\boldsymbol{\theta}) \ket{\boldsymbol{0}}.
\end{align}
The iterative optimization of Eq.(\ref{vqe:cost}) is similar to the one that one encounters in machine learning , thus, is also known as the cost (loss) function in Hybrid systems.



Thus, the quantum expectation values need to be executed on a quantum device. By contrast, operations like summation of the expectation values in Eq.(\ref{vqe:cost}) and the iterative optimization, for e.g., gradient-descent (parameter update à la machine learning) of each of the terms in  $E_{VQE} = \min_{\boldsymbol{\theta}} \sum_{a} w_a E_{p_a}$, is carried out using classical optimization algorithms. This clearly depicts the workload sharing pipeline between classical and quantum devices in hybrid frameworks.

\end{document}